\documentclass[preprint,5p,times,authoryear]{elsarticle}
\usepackage{natbib}

\usepackage{amsmath} 
\usepackage{amsfonts} 
\usepackage{bm} 
\usepackage{calc} 
\usepackage{float}
\usepackage{subcaption} 
\usepackage{hyperref} 

\usepackage{svg}
\usepackage{siunitx}
\usepackage{amssymb,moreverb,pgfplots,tikz}
\pgfplotsset{compat=1.12}
\usepackage{rotating}
\usepackage{enumitem}
\usepackage{cleveref}

\usepackage{algorithm}
\usepackage{algorithmicx}
\usepackage{algpseudocode}

\makeatletter
\renewcommand*\env@matrix[1][*\c@MaxMatrixCols c]{%
  \hskip -\arraycolsep
  \let\@ifnextchar\new@ifnextchar
  \array{#1}}

\newcommand{\norm}[1]{\left\lVert#1\right\rVert}

\definecolor{mycolor1}{rgb}{0.00000,0.44700,0.74100}%
\definecolor{mycolor2}{rgb}{0.92900,0.69400,0.12500}%
\definecolor{MatlabBlue}{rgb}    {0     , 0.4470, 0.7410}
\definecolor{MatlabRed}{rgb}     {0.8500, 0.3250, 0.0980}
\definecolor{MatlabYellow}{rgb}  {0.9290, 0.6940, 0.1250}
\definecolor{MatlabPurple}{rgb}  {0.4940, 0.1840, 0.5560}
\definecolor{MatlabGreen}{rgb}   {0.4660, 0.6740, 0.1880}
\definecolor{MatlabBabyBlue}{rgb}{0.3010, 0.7450, 0.9330}
\definecolor{MatlabGray}{rgb}{0.5, 0.5, 0.5}
\definecolor{MatlabLightGray}{rgb}{0.75, 0.75, 0.75}
\definecolor{MatlabBlack}{rgb}{0, 0, 0}

\definecolor{MatlabLightGray4}{rgb}{0.875, 0.875, 0.875}
\definecolor{MatlabLightGray3}{rgb}{0.85, 0.85, 0.85}
\definecolor{MatlabLightGray2}{rgb}{0.775, 0.775, 0.775}
\definecolor{MatlabLightGray1}{rgb}{0.7, 0.7, 0.7}
\definecolor{MatlabGray30}{rgb}{0.3, 0.3, 0.3}
\definecolor{MatlabGray40}{rgb}{0.4, 0.4, 0.4}
\definecolor{MatlabGray50}{rgb}{0.5, 0.5, 0.5}
\definecolor{MatlabGray60}{rgb}{0.6, 0.6, 0.6}
\definecolor{MatlabGray70}{rgb}{0.7, 0.7, 0.7}
\definecolor{MatlabGray80}{rgb}{0.8, 0.8, 0.8}
\definecolor{MatlabGray90}{rgb}{0.9, 0.9, 0.9}

\definecolor{Red}{rgb}{1 0 0}
\definecolor{Black}{rgb}{0 0 0}

\definecolor{myblue}{rgb}{0 0 0}

\newcommand{\tikzline}[1]{(\protect\tikz[baseline=-0.6ex,x=1pt,y=1pt]{ \protect\draw[#1,thick] [-] (0,0) -- (10,0);})}
\newcommand{\tikzdashedline}[1]{(\protect\tikz[baseline=-0.6ex,x=0.9pt,y=1pt]{ \protect\draw[#1,thick,dashed] [-] (0,0) -- (10,0);})}

\DeclareRobustCommand\encircle[1]{%
\tikz[baseline=(X.base)] 
   \node (X) [draw, shape=circle, inner sep=0] {\strut #1};}

\hypersetup{pdfauthor={Name}} 

\usepackage{amsthm}

\newtheorem{theorem}{Theorem}

\newtheorem{remark}{Remark}[section]

\usepackage{graphicx}

\usepackage{fancyhdr}
\begin{document}
   
\begin{titlepage}
    \title{Nullspace-based Fault Diagnosis for Closed-Loop Mechatronic Systems\\ with Application to Semiconductor Equipment}
    
    \author[1]{Koen Classens\corref{cor1}}
    \ead{k.h.j.classens@gmail.com}
    \author[2]{Jeroen van de Wijdeven}
    \author[1]{W. P. M. H. (Maurice) Heemels}
    \author[1,3]{Tom Oomen}
    
    \cortext[cor1]{Corresponding author}
   
    \address[1]{Eindhoven University of Technology, Department of Mechanical Engineering, Control Systems Technology, Eindhoven, The Netherlands}
    \address[2]{ASML, Veldhoven, The Netherlands}
    \address[3]{Delft University of Technology, Department of Mechanical Engineering, Delft Center for Systems and Control, Delft, The Netherlands}

\begin{abstract}
Fault detection and isolation (FDI) systems are critical for modern mechatronic production equipment, as their continuous operation is heavily dependent on the ability to detect and isolate faults in a timely and efficient manner. The aim of this paper is to address closed-loop aspects for linear systems and enable the application of well-known nullspace-based FDI synthesis conditions to mechatronic systems subject to actuator and sensor faults. These tailored FDI synthesis conditions are applied to a large-scale prototype wafer stage, showcasing the proposed approach through real experiments, thereby underlining the usefulness of the derived synthesis conditions for a wide range of production machines and scientific instruments.
\end{abstract}

\begin{keyword}
Fault Detection and Isolation \sep 
Fault Diagnosis \sep 
Mechatronics \sep 
Lithography \sep 
Semiconductor.
\end{keyword}

\end{titlepage}
    
\maketitle
\thispagestyle{fancy}

\section{Introduction}
The high-tech industry is moving towards a new approach to manage maintenance of its mechatronic production equipment, with a strong focus on predictive strategies to reduce the high expenses related to unplanned downtime. In this transition, real-time diagnosis of closed-loop systems is expected to play a crucial role to facilitate effective and targeted maintenance. Historically, these fault diagnosis systems are well established in safety-critical domains such as aerospace and automotive, whereas this article explores their application in linear multi-input multi-output (MIMO) mechatronic systems.

Driven by the stringent requirements of safety-critical systems, significant progress has been made over the past decades in the areas of fault diagnosis and fault tolerance. This progress is documented in numerous surveys such as \cite{isermannModelbasedFaultdetectionDiagnosis2005,hwangSurveyFaultDetection2010,gaoSurveyFaultDiagnosis2015a,Li2020}. Representative developments include observer-based approaches \citep{frank1997survey,zhangIntegratedTradeoffDesign2008,wangLMIApproachIndex2007}, factorization-based methods \citep{Ding2000}, parameter estimation techniques \citep{gertlerDiagnosingParametricFaults1995,classens2022fault}, and statistical approaches \citep{yinReviewBasicDataDriven2014}.

Recent research in fault diagnosis has increasingly focused on data-driven, machine learning, and hybrid approaches, driven by the growing availability of operational data and advances in artificial intelligence. Data-driven methods have shown strong capabilities in machinery health monitoring and prognostics \citep{Lei2018,Neupane2025}, while machine learning techniques, including deep learning and ensemble learning methods, have demonstrated promising performance for fault detection and classification in complex systems \citep{leiApplicationsMachineLearning2020,Mian2024,Leite2025}. In parallel, hybrid approaches that combine physics-based models with data-driven or knowledge-based techniques have gained significant attention \citep{Wilhelm2021,Rezamand2020}.

Due to the high financial stakes associated with downtime of mechatronic production equipment, transparency and deep understanding are required to support decision-making. Model-based approaches offer this and provide greater transparency and interpretability than black-box methods. In addition, the reproducible behavior and availability of accurately identified models, often developed during the design and integration phases, make model-based approaches particularly suitable for this application domain.

Fault detection and isolation (FDI) for closed-loop mechatronics poses several unique challenges. Generically applicable methods such as the nullspace-based approach \citep{vargaNewComputationalParadigms2013,vargaSolvingFaultDiagnosis2017} have been developed using an open-loop framework and have been successfully applied in closed-loop scenarios. However, in areas such as system identification, it is widely recognized that closed-loop behavior is critical and must be considered carefully \citep{vandenhofClosedloopIssuesSystem1998}.

Although important progress has been made in the field of FDI, the implications of the closed loop are not yet fully understood. While generic existence conditions for FDI filters are well established, further analysis is required to uncover fundamental limitations and opportunities. Moreover, experimental evidence supporting the applicability of such methods to large-scale industrial systems remains limited. The framework presented in this paper builds on the preliminary version reported in \cite[Chpt. 6]{classensFaultDiagnosisUncertain2024} and addresses closed-loop aspects relevant to mechatronic systems. The main contributions of this paper are the following.

\begin{enumerate}[label=C\arabic*]  \setlength{\itemsep}{0pt}
    \item Conditions are established under which open-loop nullspace-based FDI synthesis remains valid for closed-loop systems, and scenarios are identified in which a closed-loop formulation provides additional design freedom.
    \item Generic solvability and isolability conditions for approximate fault detection and isolation are translated into tailored criteria for systems subjected to actuator and sensor faults.
    \item The nullspace-based approach is experimentally validated on multi-input multi-output prototype wafer stage, demonstrating its applicability to industrial mechatronic systems.
\end{enumerate}

The remainder of this article is organized as follows. The problem is formulated in \Cref{chpt6:sec:PROBLEM FORMULATION}. In particular, the approximate fault detection and isolation problem (AFDIP) for closed-loop and open-loop systems. Next, the closed-loop aspects are clarified in \Cref{chpt6:sec:CLAspects}. The solution to the AFDIP is described in \Cref{chpt6:sec:SOLUTION}. Subsequently, the solvability requirements are examined and transformed to reveal fundamental properties in \Cref{chpt6:sec:DESIGN FOR ACTUATOR AND SENSOR FAULTS}, followed by a concise discussion. Experimental results are presented in \Cref{chpt6:sec:EXPERIMENTAL PROOF OF PRINCIPLE}. Finally, the findings are summarized and conclusions drawn in \Cref{chpt6:sec:CONCLUSION}.

\thispagestyle{empty}
\section{Problem formulation} \label{chpt6:sec:PROBLEM FORMULATION}
Consider the closed-loop output for linear continuous-time multi-input multi-output (MIMO) systems affected by additive faults, described in the Laplace domain, by
\begin{equation}
\begin{split}
y &= \left( I + G_{u} C \right)^{-1} G_{u} C r + \left( I + G_{u} C \right)^{-1} G_{d} d \\ &\qquad+ \left( I + G_{u} C \right)^{-1} G_{w} w + \left( I + G_{u} C \right)^{-1} G_{f} f,
\end{split}
\label{chpt6:eq:y_CL}
\end{equation}
and consider the control input, described by
\begin{equation}
\begin{split}
u &= \left( I + C G_{u} \right)^{-1} C r -\left( I + C G_{u} \right)^{-1} C G_{d} d \\ &\qquad -\left( I + C G_{u} \right)^{-1} C G_{w} w -\left( I + C G_{u} \right)^{-1} C G_{f} f,
\end{split}
\label{chpt6:eq:u_CL}
\end{equation}
see \Cref{chpt6:fig:CL_System}. In time domain, the output $y$ takes values in $\mathbb{R}^{n_{y}}$, the control input $u$ takes values in $\mathbb{R}^{n_{u}}$, the disturbance $d$ takes values in $\mathbb{R}^{n_{d}}$, the noise $w$ takes values in $\mathbb{R}^{n_{w}}$, and the fault vector $f$ takes values in $\mathbb{R}^{n_{f}}$. The transfer function matrices (TFMs) $G_{u}$, $G_{d}$, $G_{w}$, $G_{f}$, and the feedback controller $C$ are of corresponding dimensions.

A residual generator, described by a proper and stable TFM $Q := \begin{bmatrix}
Q_{y} & Q_{u}
\end{bmatrix}$, augments the closed-loop controlled system, and processes the known control input $u$ and measurable output signal $y$, see Figure \ref{chpt6:fig:CL_System}. Hence, the residual $\varepsilon$, taking values in $\mathbb{R}^{n_{\varepsilon}}$, is described by
\begin{equation}
\varepsilon = \begin{bmatrix} Q_{y} & Q_{u} \end{bmatrix} \begin{bmatrix}
y \\ u
\end{bmatrix},
\label{chpt6:eq:varepsilon_CL}
\end{equation}
and is used for decision making on the presence or absence of faults. The filter $Q$, which is to be designed, must meet specific requirements to enable the detection and isolation of faults. These requirements are imposed on the internal representation, derived by substituting the closed-loop output and input, described by \eqref{chpt6:eq:y_CL} and \eqref{chpt6:eq:u_CL}, into \eqref{chpt6:eq:varepsilon_CL}, which gives
\begin{equation*}
\varepsilon = \begin{bmatrix}
Q_{y} & Q_{u}
\end{bmatrix} \begin{bmatrix}
G_{u} C S & S G_{d} & S G_{w} & S G_{f} \\
C S & -C S G_{d} & -C S G_{w} & -C S G_{f}
\end{bmatrix} \begin{bmatrix}
r \\ d \\ w \\ f
\end{bmatrix},
\end{equation*}
with sensitivity function $S = \left(I + G_{u} C \right)^{-1}$. Alternatively, the internal representation of the closed-loop formulation is denoted as
\begin{equation}
\varepsilon = 
R_{r}^{\mathrm{cl}} r + R_{d}^{\mathrm{cl}} d + R_{w}^{\mathrm{cl}} w + R_{f}^{\mathrm{cl}} f,
\label{chpt6:eq:internal}
\end{equation}
where 
\begin{equation}
\begin{split}
&\begin{bmatrix}[c|c|c|c]
R_{r}^{\mathrm{cl}} & R_{d}^{\mathrm{cl}} & R_{w}^{\mathrm{cl}} & R_{f}^{\mathrm{cl}}
\end{bmatrix} :=\\ &\qquad \begin{bmatrix}
Q_{y} & Q_{u}
\end{bmatrix} \begin{bmatrix}[c|c|c|c]
G_{u} C S & S G_{d} & S G_{w} & S G_{f} \\
C S & -C S G_{d} & -C S G_{w} & -C S G_{f}
\end{bmatrix}, \label{chpt6:eq:compare}
\end{split}
\end{equation}
and the column of the fault-to-residual TFM $R_{f}^{\mathrm{cl}}$, corresponding to fault $f_{j}$, is denoted by $R_{f_{j}}^{\mathrm{cl}}$, where $j = 1, \ldots, n_{f}$.


%
\begin{figure}[tb]
\centering
\includegraphics[]{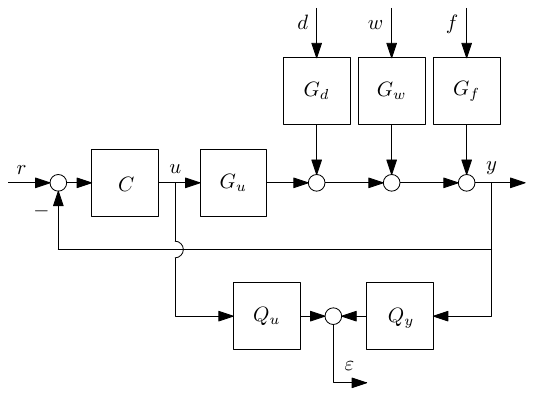}
\caption{Closed-loop controlled system equipped with a residual generator.}
\label{chpt6:fig:CL_System}
\end{figure}

In case the system is in open loop, see Figure \ref{chpt6:fig:OL_System}, the output is described by
\begin{equation}
y = G_{u} u + G_{d} d + G_{w} w + G_{f} f.
    \label{chpt6:eq:y_OL}
\end{equation}
In this scenario, the internal representation is described by
\begin{equation}
\varepsilon = 
R_{u}^{\mathrm{ol}} u + R_{d}^{\mathrm{ol}} d + R_{w}^{\mathrm{ol}} w + R_{f}^{\mathrm{ol}} f,
\label{chpt6:eq:OL_internal}
\end{equation}
where
\begin{equation}
\begin{split}
&\begin{bmatrix}[c|c|c|c]
R_{u}^{\mathrm{ol}} & R_{d}^{\mathrm{ol}} & R_{w}^{\mathrm{ol}} & R_{f}^{\mathrm{ol}}
\end{bmatrix} \\ &\qquad := \begin{bmatrix}
Q_{y} & Q_{u}
\end{bmatrix} \begin{bmatrix}[c|c|c|c]
G_{u} & G_{d} & G_{w} & G_{f} \\
I & 0 & 0 & 0
\end{bmatrix}, \label{chpt6:eq:OL_compare}
\end{split}
\end{equation}
and $R_{f_{j}}^{\mathrm{ol}}$ is the column of $R_{f}^{\mathrm{ol}}$ corresponding to the $j^{\mathrm{th}}$ fault.
\begin{figure}[tb]
\centering
\includegraphics[]{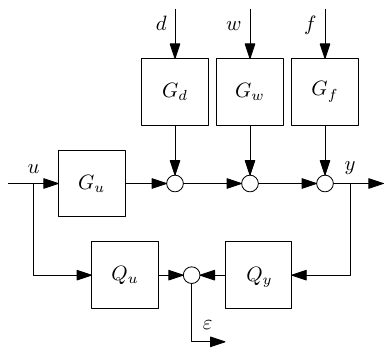}
\caption{Open-loop system equipped with a residual generator.}
\label{chpt6:fig:OL_System}
\end{figure}

\subsection{Approximate fault detection problem (AFDP)}
The approximate fault detection problem (AFDP) for closed-loop systems involves finding a filter $Q$ such that the internal representation \eqref{chpt6:eq:internal} is proper, stable, and satisfies the following requirements:
\begin{enumerate}[label=\alph*)] \setcounter{enumi}{0}
\item $R_{r}^{\mathrm{cl}} = 0$, \label{chpt6:cond:FD_CL_1}
\item $R_{d}^{\mathrm{cl}} = 0$, \label{chpt6:cond:FD_CL_2}
\item $R_{w_{j}}^{\mathrm{cl}} \approx 0, \:\: j = 1, \ldots, n_{w}, \:\: \mathrm{with} \: \max_{j} \left( \mathrm{Re} \left( \lambda \left( R_{{w}_{j}}^{\mathrm{cl}} \right) \right) \right) < 0$, \label{chpt6:cond:FD_CL_3}
\item $R_{f_{j}}^{\mathrm{cl}} \neq 0, \:\: j = 1, \ldots, n_{f}, \:\: \mathrm{with} \: \max_{j} \left( \mathrm{Re} \left( \lambda \left( R_{{f}_{j}}^{\mathrm{cl}} \right) \right) \right) < 0$, \label{chpt6:cond:FD_CL_4}
\end{enumerate}
where $\mathrm{Re} \left( \lambda \left( \cdot \right) \right)$ denotes the real part of the poles of the TFMs. The first requirements \ref{chpt6:cond:FD_CL_1} and \ref{chpt6:cond:FD_CL_2} enforce decoupling of the reference input $r$ and the disturbance $d$ from the residual $\varepsilon$, in the sense that $r$ and $d$ do not propagate to the residual signal. That is, when $R_{r}^{\mathrm{cl}} = 0$ and $R_{d}^{\mathrm{cl}} = 0$, the residual $\varepsilon$ is completely insensitive to the reference and external disturbances, and any nonzero residual can therefore not be attributed to these signals. The requirements \ref{chpt6:cond:FD_CL_3} and \ref{chpt6:cond:FD_CL_4} are formalized as part of an optimization problem in \Cref{chpt6:sec:SOLUTION}. The solution to this optimization problem simultaneously minimizes the transfer of measurement noise $w$ to the residual $\varepsilon$, while ensuring that the transfer from faults $f$ to $\varepsilon$ remains nonzero. This guarantees that the residual is predominantly driven by fault effects, while being robust to noise.
Note that if the requirements \ref{chpt6:cond:FD_CL_1}, \ref{chpt6:cond:FD_CL_2}, and \ref{chpt6:cond:FD_CL_4} hold, the requirement \ref{chpt6:cond:FD_CL_3} can always be achieved by scaling the filter $Q$.

The disturbance inputs targeted for exact decoupling, and for which such decoupling is presumably achievable, are represented by $G_{d}$. In contrast, noise and disturbances not covered by these inputs are represented by $G_{w}$. In practice, the distinction between $G_{d}$ and $G_{w}$ lies solely in how they are handled in the solution of the AFDP.

When the system is considered in open loop, see Figure \ref{chpt6:fig:OL_System}, the goal is to decouple the plant input $u$. To this end, requirement \ref{chpt6:cond:FD_OL_1} is replaced by $R_{u}^{\mathrm{ol}} = 0$. In this case, the AFDP involves finding a filter $Q$ such that the internal representation \eqref{chpt6:eq:OL_internal} is proper, stable, and satisfies the following requirements:
\begin{enumerate}[label=\alph*)] \setcounter{enumi}{0}
\item $R_{u}^{\mathrm{ol}} = 0$, \label{chpt6:cond:FD_OL_1}
\item $R_{d}^{\mathrm{ol}} = 0$, \label{chpt6:cond:FD_OL_2}
\item $R_{w_{j}}^{\mathrm{ol}} \approx 0, \:\: j = 1, \ldots, n_{w}, \:\: \mathrm{with} \: \max_{j} \left( \mathrm{Re} \left( \lambda \left( R_{{w}_{j}}^{\mathrm{ol}} \right) \right) \right) < 0$, \label{chpt6:cond:FD_OL_3}
\item $R_{f_{j}}^{\mathrm{ol}} \neq 0, \:\: j = 1, \ldots, n_{f}, \:\: \mathrm{with} \: \max_{j} \left( \mathrm{Re} \left( \lambda \left( R_{{f}_{j}}^{\mathrm{ol}} \right) \right) \right) < 0$. \label{chpt6:cond:FD_OL_4}
\end{enumerate}
\thispagestyle{empty}

\subsection{Approximate fault detection and isolation problem (AFDIP)}
To effectively isolate faults, a dedicated structure is enforced in the residual signals, with the aim of ensuring that each residual $\varepsilon_{i}$ is influenced by distinct set of faults. To this end, consider a $n_{\varepsilon}$-dimensional residual vector $\varepsilon$ and a structured bank of filters $Q^{(1)}$ to $Q^{(n_{\varepsilon})}$ as
\begin{equation}
Q = \begin{bmatrix}
Q^{(1)} \\ \vdots \\ Q^{(n_{\varepsilon})}
\end{bmatrix}. \label{chpt6:eq:bankQ}
\end{equation}
The corresponding fault-to-residual TFM in the closed-loop and open-loop scenario, $R_{f_{j}}^{\mathrm{cl}}$ and $R_{f_{j}}^{\mathrm{ol}}$ respectively, are structured accordingly as
\begin{equation}
R_{f} = \begin{bmatrix}
R_{f_{1}}^{(1)} & \ldots & R_{f_{n_{f}}}^{(1)} \\
\vdots & \ddots & \vdots \\
R_{f_{1}}^{(n_{\varepsilon})} & \ldots & R_{f_{n_{f}}}^{(n_{\varepsilon})}
\end{bmatrix}.
\end{equation}
Let $S_{R_{f}}$ be the corresponding $n_{\varepsilon} \times n_{f}$ matrix, which encodes the structure of $R_{f}$ as
\begin{subequations}
\begin{align}
S_{R_{f}} (i,j) = 1 \qquad & \mathrm{if}\qquad  R_{f_{j}}^{(i)} \neq 0 \\
S_{R_{f}} (i,j) = 0 \qquad & \mathrm{if}\qquad  R_{f_{j}}^{(i)} = 0
\end{align}
\end{subequations}

The approximate fault detection and isolation problem (AFDIP) requires determining a stable and proper filter $Q$ that, given a structure matrix $S$, additionally satisfies
\begin{enumerate}[label=\emph{\alph*})] \setcounter{enumi}{4}
\item $S_{R_{f}} = S$. \label{chpt6:cond:FD_OL_5}
\end{enumerate}


The selection of the structure matrix $S$ is an important factor in solving fault isolation problems. Notably, this selection is not unique; multiple configurations of $S$ can often yield equally satisfactory synthesis results. According to the nomenclature of \cite{gertlerFaultDetectionDiagnosis1998}, for a desired structure matrix $S$, the $i^{\mathrm{th}}$ row is referred to as the $i^{\mathrm{th}}$ specification, while the $j^{\mathrm{th}}$ column is known as the $j^{\mathrm{th}}$ fault signature.
The specifications describe the desired sensitivity pattern of each residual with respect to the set of faults and are used primarily during the synthesis procedure, i.e., the design of the filter bank $Q$. During synthesis, each filter $Q^{(i)}$ is constructed such that the corresponding residual $\varepsilon_i$ is sensitive only to the faults indicated by the nonzero entries in the $i^{\mathrm{th}}$ row of $S$, while being decoupled from the remaining faults.
In contrast, the fault signatures describe how each fault affects the residual vector and are used primarily during fault isolation. Specifically, the $j^{\mathrm{th}}$ fault signature is given by the $j^{\mathrm{th}}$ column of $S$ and represents the pattern of residuals that are expected to become nonzero when fault $f_j$ occurs. Fault isolation is achieved by comparing the observed pattern of fired (nonzero) residuals with the fault signatures encoded in the columns of $S$.
Although pairwise distinct fault signatures are sufficient to uniquely identify a single fault, a sparse structure matrix generally enables the isolation of a larger number of simultaneously occurring faults, as overlapping residual responses are reduced.

\section{Closed-loop aspects and solvability of the AFD(I)P}
\label{chpt6:sec:CLAspects}
\thispagestyle{empty}

This section has two objectives. First, it establishes the relationship between the open-loop and closed-loop formulations of the approximate fault detection and (isolation) problem (AFD(I)P), clarifying under which conditions the simpler open-loop formulation can be used for closed-loop systems. Second, it recalls standard solvability and isolability conditions, which are required later for residual generator synthesis. The first part contains new results, while the second part summarizes classical results, mainly following \cite{vargaSolvingFaultDiagnosis2017}, and is included for completeness and later use.

\subsection{Relationship between open-loop and closed-loop formulations}

A comparison of the open-loop and closed-loop decoupling requirements, in particular
\ref{chpt6:cond:FD_CL_1} and \ref{chpt6:cond:FD_CL_2}, reveals that certain residual generators $Q$ satisfy both formulations. In particular, residual generators obtained from the open-loop formulation may already fulfill the closed-loop requirements. The following two theorems formalize this relationship and constitute a contribution of this work.

\begin{theorem}
Consider $G_{u}$ with $n_{y} \geq n_{u}$, and let $G_{d}$, $G_{w}$ and $G_{f}$ be given transfer function matrices in a closed-loop system described by \eqref{chpt6:eq:y_CL} and \eqref{chpt6:eq:u_CL}, with a full normal rank controller $C$, and residual generator $Q$ in \eqref{chpt6:eq:varepsilon_CL}. Let $G_{u}$, $G_{d}$, $G_{w}$ and $G_{f}$ be the same transfer function matrices in the open-loop formulation described by \eqref{chpt6:eq:y_OL}. Then, any filter $Q$ that achieves $R_{u}^{\mathrm{ol}} = 0$ and $R_{d}^{\mathrm{ol}} = 0$ in the open-loop formulation also achieves $R_{r}^{\mathrm{cl}} = 0$ and $R_{d}^{\mathrm{cl}} = 0$ in the closed-loop formulation. Moreover, the corresponding nullspaces coincide, implying identical residual dynamics, i.e.,
\[
R_{w}^{\mathrm{ol}} = R_{w}^{\mathrm{cl}}, \quad
R_{f}^{\mathrm{ol}} = R_{f}^{\mathrm{cl}}.
\]
\label{chpt5:theoremOLisCL_nygeqnu}
\end{theorem}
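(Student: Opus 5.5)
The argument is direct substitution. The open-loop conditions give $Q_y G_u + Q_u = 0$ and $Q_y G_d = 0$. From the first relation I will write $Q_u = -Q_y G_u$ and substitute it into each closed-loop block of \eqref{chpt6:eq:compare}. Each block then reduces to a single expression in $Q_y$, $G_u$, $C$ and $S = (I+G_uC)^{-1}$. The only identity I need is the push-through identity $G_u C S = G_u C (I+G_uC)^{-1} = I - S$, equivalently $S G_u C = I - S$, which holds because $S$ and $G_u C$ commute.

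For the reference channel, $R_r^{\mathrm{cl}} = Q_y G_u C S + Q_u C S = (Q_y G_u + Q_u) C S = 0$. This step uses only $R_u^{\mathrm{ol}}=0$.

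For a generic external channel $G_\ast \in \{G_d, G_w, G_f\}$, the block is
\[
Q_y S G_\ast - Q_u C S G_\ast = Q_y S G_\ast + Q_y G_u C S G_\ast = Q_y (I + G_u C) S G_\ast = Q_y G_\ast .
\]
Hence $R_d^{\mathrm{cl}} = Q_y G_d = R_d^{\mathrm{ol}} = 0$, $R_w^{\mathrm{cl}} = Q_y G_w = R_w^{\mathrm{ol}}$ and $R_f^{\mathrm{cl}} = Q_y G_f = R_f^{\mathrm{ol}}$, since the open-loop blocks are $Q_y G_\ast + Q_u\cdot 0$. This gives the implication and the equality of the residual dynamics. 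Stability and properness of $R_w$ and $R_f$ carry over automatically because the TFMs are identical.

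The main obstacle is the claim that \emph{the corresponding nullspaces coincide}. This requires the converse: any $Q$ with $R_r^{\mathrm{cl}}=0$ and $R_d^{\mathrm{cl}}=0$ also satisfies the open-loop conditions. Here the hypotheses on $C$ and on the dimensions enter. The relation $R_r^{\mathrm{cl}} = (Q_y G_u + Q_u) C S = 0$ with $S$ invertible reduces to $(Q_y G_u + Q_u) C = 0$. Since $C$ is an $n_u \times n_y$ TFM of full normal rank and $n_y \ge n_u$, its normal rank is $n_u$, so $C$ has a right inverse over the rational functions. Right-multiplying by it gives $Q_y G_u + Q_u = 0$, and the computation above then gives $R_d^{\mathrm{ol}} = R_d^{\mathrm{cl}} = 0$.

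The two left nullspaces, $\mathcal{N}_L$ of $\begin{bmatrix} G_u & G_d \\ I & 0\end{bmatrix}$ and of $\begin{bmatrix} G_uCS & SG_d \\ CS & -CSG_d\end{bmatrix}$, are therefore equal as sets of rational row vectors. I will state this explicitly and note where $n_y \ge n_u$ is used, namely to guarantee that full normal rank of $C$ means full row rank, hence right-invertibility. This also suggests that the case $n_y < n_u$ is the one where the closed-loop formulation offers extra freedom.
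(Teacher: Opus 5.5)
Your proof is correct, but it follows a different route from the paper's.

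The paper's argument has two parts. First, it writes $G^{\mathrm{cl}} = G^{\mathrm{ol}} H_1$ with $H_1 = \begin{bmatrix} CS & -CSG_d \\ 0 & I \end{bmatrix}$ and reads off the inclusion $\mathrm{null}_l(G^{\mathrm{ol}}) \subseteq \mathrm{null}_l(G^{\mathrm{cl}})$. It upgrades this to equality by a normal-rank count: for $n_y \geq n_u$ and full-rank $C$, $H_1$ has full row rank $n_u + n_d$, so both left nullspaces have dimension $n_y - r_d$. Second, it checks the equality of $R_w$ and $R_f$ by parametrizing all decoupling filters as $Q = W N_{l,d} \begin{bmatrix} M & -N_u \end{bmatrix}$ through a left coprime factorization and substituting into both internal representations.

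You instead eliminate $Q_u = -Q_y G_u$ and show that every closed-loop block collapses to $Q_y G_\ast$. You then obtain the reverse inclusion directly by cancelling $S$ and right-inverting $C$. This uses the same rank fact as the paper (right-invertibility of $H_1$), but without any dimension counting.

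What your version buys:
\begin{itemize}
\item It is more elementary and makes the role of each hypothesis explicit. $R_w^{\mathrm{ol}} = R_w^{\mathrm{cl}}$ and $R_f^{\mathrm{ol}} = R_f^{\mathrm{cl}}$ hold for every $Q$ with $R_u^{\mathrm{ol}} = 0$, independently of $n_y \geq n_u$ and of the rank of $C$. So they also hold for open-loop filters in the setting of Theorem~\ref{chpt5:theoremOLisCL_nystnu}, and those hypotheses are needed only for the converse inclusion.
\item It sidesteps the rows-versus-columns bookkeeping on which the paper's dimension argument rests.
\end{itemize}

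What the paper's route buys:
\begin{itemize}
\item It connects the result directly to the proper stable nullspace basis used for synthesis in Section~\ref{chpt6:sec:SOLUTION}.
\item Its rank count extends immediately to quantify the larger closed-loop nullspace when $n_y < n_u$.
\end{itemize}
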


\begin{proof}
See Appendix~\ref{app:proof:theoremOLisCL_nygeqnu}.
\end{proof}

Theorem~\ref{chpt5:theoremOLisCL_nygeqnu} shows that when $n_{y} \geq n_{u}$, the open-loop and closed-loop formulations are equivalent from the perspective of residual generation. Consequently, residual generators can be designed using the open-loop formulation even for closed-loop systems, yielding identical disturbance, noise, and fault responses while avoiding explicit dependence on the controller.

The next result addresses the complementary case $n_{y} < n_{u}$.

\begin{theorem}
\label{chpt5:theoremOLisCL_nystnu}
Consider $G_{u}$ with $n_{y} < n_{u}$, and let $G_{d}$, $G_{w}$ and $G_{f}$ be given transfer function matrices in a closed-loop system described by \eqref{chpt6:eq:y_CL} and \eqref{chpt6:eq:u_CL}, with a full normal rank controller $C$, and residual generator $Q$ in \eqref{chpt6:eq:varepsilon_CL}. Let $G_{u}$, $G_{d}$, $G_{w}$ and $G_{f}$ be the same transfer function matrices in the open-loop formulation described by \eqref{chpt6:eq:y_OL}. Then, the set of filters $Q$ satisfying $R_{u}^{\mathrm{ol}} = 0$ and $R_{d}^{\mathrm{ol}} = 0$ is a subset of the set of filters satisfying $R_{r}^{\mathrm{cl}} = 0$ and $R_{d}^{\mathrm{cl}} = 0$.
\end{theorem}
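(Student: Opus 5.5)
The proof will be a direct algebraic comparison of the two sets of decoupling conditions, written out from \eqref{chpt6:eq:compare} and \eqref{chpt6:eq:OL_compare}. In the open-loop case, $R_{u}^{\mathrm{ol}} = Q_{y} G_{u} + Q_{u} = 0$ and $R_{d}^{\mathrm{ol}} = Q_{y} G_{d} = 0$. In the closed-loop case,
\begin{equation*}
R_{r}^{\mathrm{cl}} = (Q_{y} G_{u} + Q_{u}) C S
\end{equation*}
and
\begin{equation*}
R_{d}^{\mathrm{cl}} = Q_{y} S G_{d} - Q_{u} C S G_{d}.
\end{equation*}
The first expression follows from the push-through identity $C S = (I + C G_{u})^{-1} C$. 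So the main step is to show that every $Q$ satisfying the open-loop conditions also satisfies the closed-loop ones. This works for any $n_{y}, n_{u}$, and it is the inclusion claimed in Theorem~\ref{chpt5:theoremOLisCL_nystnu}.

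\textbf{Step 1 (inclusion).} Take $Q$ with $Q_{u} = -Q_{y} G_{u}$ and $Q_{y} G_{d} = 0$. Then $R_{r}^{\mathrm{cl}} = (Q_{y} G_{u} + Q_{u}) C S = 0$ immediately. For the disturbance term, substitute $Q_{u}$ to get
\begin{equation*}
R_{d}^{\mathrm{cl}} = Q_{y} (I + G_{u} C) S G_{d} = Q_{y} G_{d} = 0,
\end{equation*}
using $S^{-1} = I + G_{u} C$. The same substitution gives $R_{w}^{\mathrm{cl}} = Q_{y} G_{w} = R_{w}^{\mathrm{ol}}$ and $R_{f}^{\mathrm{cl}} = R_{f}^{\mathrm{ol}}$. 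This is consistent with Theorem~\ref{chpt5:theoremOLisCL_nygeqnu}, whose appendix argument presumably contains this same calculation and could simply be cited. No assumption on $C$ or on the dimensions is needed for this direction, so the claimed subset relation already holds at this point.

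\textbf{Step 2 (why the inclusion can be strict when $n_{y} < n_{u}$).} To explain why only inclusion is stated, I would examine the converse. From $R_{r}^{\mathrm{cl}} = 0$ we get $(Q_{y} G_{u} + Q_{u}) C S = 0$. Since $S$ is invertible, this means $X C = 0$ with $X := Q_{y} G_{u} + Q_{u}$. Here $C$ is $n_{u} \times n_{y}$ with full normal rank, which is $n_{y} < n_{u}$. So $C$ has a nontrivial left nullspace of dimension $n_{u} - n_{y}$, and nonzero rows $X$ with $X C = 0$ exist. Taking $Q_{u} = X - Q_{y} G_{u}$ with such an $X$ then gives closed-loop decoupling of $r$ without $R_{u}^{\mathrm{ol}} = 0$. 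The $d$-condition becomes $Q_{y} S G_{d} - (X - Q_{y} G_{u}) C S G_{d} = Q_{y} G_{d} - X C S G_{d} = Q_{y} G_{d}$, since $X C = 0$. So it still reduces to $Q_{y} G_{d} = 0$, which is again satisfiable. This exhibits the extra design freedom. In contrast, when $n_{y} \geq n_{u}$, $C$ has full column normal rank, which forces $X = 0$ and gives equality as in Theorem~\ref{chpt5:theoremOLisCL_nygeqnu}.

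The only delicate point I expect is rigor in the rational-matrix setting. The relevant facts are the invertibility of $S$ and $I + C G_{u}$ as rational matrices, which follows from well-posedness of the loop, and working with normal rank rather than pointwise rank. Properness and stability of $Q$ are untouched by these identities, so the argument stays purely algebraic over the field of rational functions.
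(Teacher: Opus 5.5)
Your Step 1 is correct and is essentially the paper's argument. The paper factors $G^{\mathrm{cl}} = G^{\mathrm{ol}} H_{1}$ and concludes $Q G^{\mathrm{ol}} = 0 \Rightarrow Q G^{\mathrm{cl}} = 0$; that factorization is exactly the identity $(I + G_{u} C) S = I$ you verify entrywise, and, as you note, the inclusion needs no assumption on $C$ or on $n_{y}, n_{u}$.

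Your Step 2 goes beyond the statement: it shows strictness by constructing $X$ in the left nullspace of $C$, where the paper counts normal ranks of $G^{\mathrm{cl}} = H_{2} H_{3}$, and it is also valid. One slip in the $n_{y} \geq n_{u}$ aside: there the $n_{u} \times n_{y}$ matrix $C$ has full \emph{row} normal rank, not column, and that is what forces $X = 0$.
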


\begin{proof}
See Appendix~\ref{app:proof:theoremOLisCL_nystnu}.
\end{proof}

Theorem~\ref{chpt5:theoremOLisCL_nystnu} shows that for $n_{y} < n_{u}$ the closed-loop formulation admits additional design freedom due to a larger nullspace. While this additional freedom can be exploited by explicitly incorporating closed-loop operators, the open-loop formulation is adopted in the remainder of this paper for simplicity. The presented results extend naturally to the closed-loop case.

\subsection{Solvability and isolability conditions}

Consider the open-loop model \eqref{chpt6:eq:y_OL}. The approximate fault detection problem (AFDP) is solvable if and only if the system is completely fault detectable, i.e., each individual fault produces a nonzero residual that cannot be masked by disturbances. This property holds if and only if, for every fault channel $j = 1,\ldots,n_f$, the rank condition
\begin{equation}
\mathrm{rank} 
\begin{bmatrix}
G_{f_j} & G_d
\end{bmatrix}
>
\mathrm{rank}\, G_d
\label{chpt6:eq:CFD}
\end{equation}
is satisfied, where $G_{f_j}$ denotes the $j^{\mathrm{th}}$ column of $G_f$ and $\mathrm{rank}$ refers to the normal rank. The equivalence between \eqref{chpt6:eq:CFD} and solvability of the AFDP follows from \cite[Theorems 3.2, 3.7, 3.9]{vargaSolvingFaultDiagnosis2017} and will be used later when discussing existence of residual generators.

Beyond detectability, isolation requires that different faults produce distinguishable residual signatures. Let $S$ denote a binary $n_\varepsilon \times n_f$ structure matrix specifying which residual channels should respond to which faults. The system \eqref{chpt6:eq:y_CL} is $S$-fault isolable if and only if, for each residual channel $i = 1,\ldots,n_\varepsilon$, the rank condition
\begin{equation}
\mathrm{rank}
\begin{bmatrix}
G_d & \hat G_d^{(i)} & G_{f_j}
\end{bmatrix}
>
\mathrm{rank}
\begin{bmatrix}
G_d & \hat G_d^{(i)}
\end{bmatrix}, \: \forall j, S_{ij} \neq 0
\label{chpt6:eq:SFI}
\end{equation}
is satisfied, where $\hat G_d^{(i)}$ is formed from those columns of $G_f$ corresponding to $S_{ij}=0$. This condition ensures that fault $f_j$ cannot be hidden by disturbances and by faults that are required to be decoupled in residual channel $i$ \cite[Theorem 3.5]{vargaSolvingFaultDiagnosis2017}. Achievable structure matrices can be determined a priori using the procedure in \cite{vargaComputingAchievableFault2009}.

An important special case is \emph{strong fault isolability}, corresponding to $S = I_{n_f}$. In this case, all faults must be simultaneously isolable, which is equivalent to
\begin{equation}
\mathrm{rank}
\begin{bmatrix}
G_d & G_f
\end{bmatrix}
=
\mathrm{rank}\, G_d + n_f.
\label{chpt6:eq:strongIso}
\end{equation}
This condition is often not feasible in practice, especially when the number of sensors is limited.

A weaker requirement is \emph{weak fault isolability}, where faults are assumed to occur one at a time. The least restrictive structure that still allows isolation of all faults is the hollow structure $S = J_{n_f} - I_{n_f}$, where $J$ is the matrix of ones. In this case, the general $S$-fault isolability condition \eqref{chpt6:eq:SFI} reduces to the pairwise rank conditions
\begin{equation}
\mathrm{rank}
\begin{bmatrix}
G_d & G_{f_i} & G_{f_j}
\end{bmatrix}
>
\mathrm{rank}
\begin{bmatrix}
G_d & G_{f_i}
\end{bmatrix},
\quad i \neq j,
\label{chpt6:eq:SisolW}
\end{equation}
which ensure that any two distinct faults are distinguishable. Indeed, for the hollow structure one has $S_{ij}=0$ if $i=j$ and $S_{ij}=1$ otherwise, so that $\hat G_d^{(i)}$ in \eqref{chpt6:eq:SFI} reduces to $G_{f_i}$, yielding \eqref{chpt6:eq:SisolW}.

Finally, the approximate fault detection and isolation problem (AFDIP) with a prescribed structure matrix $S$ is solvable if and only if the system is $S$-fault isolable in the sense of \eqref{chpt6:eq:SFI} \cite[Theorems 3.10, 3.13]{vargaSolvingFaultDiagnosis2017}. These conditions will be used in the subsequent sections to characterize existence and structural limitations of the proposed residual generator designs.

\section{Solution to the AFDIP} \label{chpt6:sec:SOLUTION}
\thispagestyle{empty}

To synthesize a fault detection and isolation filter that solves the AFDIP, i.e., criteria \ref{chpt6:cond:FD_OL_1} to \ref{chpt6:cond:FD_OL_5} for an $S$-fault isolable system according to the rank condition \eqref{chpt6:eq:SFI}, the nullspace-based approach is followed \citep{vargaSolvingFaultDiagnosis2017}. Here, each filter $i=1,\ldots,q$ in the bank of filters \eqref{chpt6:eq:bankQ}, is factorized as
\begin{equation}
Q^{(i)} = Q_{3}^{(i)} Q_{2}^{(i)} Q_{1}^{(i)}.
\end{equation}
Each factor is interpreted as a partial synthesis result addressing specific requirements. First, for a specific $i$, define a new fault input $\hat{f}^{(i)}$ that contains the components $f_{j}$ for which $S_{ij} = 1$. Define $\hat{G}_{f}^{(i)}$ as the TFM formed by the columns $G_{f_{j}}$ for which $S_{ij} = 1$ and define $\hat{G}^{(i)}_{d}$ as the TFM formed by the columns $G_{f_{j}}$ for which $S_{ij} = 0$. Assume $n_{u} + n_{d} > 0$ and consider $Q^{(i)} = \bar{Q}_{1}^{(i)} Q_{1}^{(i)}$. Here, $\bar{Q}_{1}^{(i)} = {Q}_{3}^{(i)} {Q}_{2}^{(i)}$ and $Q_{1}^{(i)}$ is the first partial synthesis result which is a proper left rational nullspace basis satisfying
\begin{equation}
Q_{1}^{(i)} G^{(i)} = 0, \label{chpt6:eq:QiGi}
\end{equation} 
where
\begin{equation}
G^{(i)} := \begin{bmatrix}
G_{u} & G_{d} & \hat{G}^{(i)}_{d} \\ I_{n_{u}} & 0 & 0
\end{bmatrix}.
\end{equation}
The residual that remains can be written as
\begin{equation}
\varepsilon_{i} = \bar{Q}_{1}^{(i)} \bar{G}_{f}^{(i)} \hat{f}^{(i)} + \bar{Q}_{1}^{(i)} \bar{G}_{w}^{(i)} w,
\end{equation}
where $\bar{G}_{f}^{(i)}:= Q_{1}^{(i)} \begin{bmatrix}
\hat{G}_{f}^{(i)} \\ 0
\end{bmatrix}$ and  $\bar{G}_{w}^{(i)}:= Q_{1}^{(i)} \begin{bmatrix}
G_{w} \\ 0
\end{bmatrix}$. The factor $Q_{1}^{(i)}$ is computed using the approach in \cite{vargaComputingNullspaceBases2008}, ensuring that $\bar{Q}_{1}^{(i)} \bar{G}_{f}$ and $\bar{Q}_{1}^{(i)} \bar{G}_{w}$ are proper and stable. Computing $Q_{1}^{(i)}$ for all $i = 1,\ldots, q$ suffices to satisfy requirements a), b), and e). The remaining $\bar{Q}_{1}^{(i)}$ can be factored as $\bar{Q}_{1}^{(i)} = Q_{3}^{(i)} Q_{2}^{(i)}$, where $Q_{2}^{(i)}$ is a rational vector to construct a linear combination of the basis vectors of $Q_{1}^{(i)}$, which can be chosen such that $Q_{2}^{(i)} Q_{1}^{(i)}$ has the least possible McMillan degree \citep{Zhou1996a}. To this end, minimum dynamic cover algorithms are deployed \citep{kimuraGeometricStructureObservers1977,vargaReliableAlgorithmsComputing2003}. Alternatively, a different linear combination of basis vectors can be chosen via $Q_{2}^{(i)}$ at the cost of a higher McMillan degree.

Next, $Q_{3}^{(i)}$ is determined to maximize the fault-to-noise gap $\eta := \tfrac{\beta}{\gamma}$. An optimization-based approach is used to achieve the largest gap between fault detectability and noise attenuation. Let $\gamma > 0$ be an admissible level for the influence of $w$ on $\varepsilon_{i}$. An optimization problem to minimize the effect of $w$ and maximize the effect of $\hat{f}^{(i)}$ is posed as follows. Given $\gamma > 0$, determine $\beta > 0$ and a stable and proper fault detection filter $Q_{3}^{(i)}$ such that
\begin{equation}
\beta = \max_{Q_{3}^{(i)}} \left\{ \norm{ Q_{3}^{(i)} Q_{2}^{(i)} \bar{G}_{f} }_{\infty -} \Big| \norm{ Q_{3}^{(i)} Q_{2}^{(i)} \bar{G}_{w}}_{\infty} \leq \gamma \right\}, \label{chpt6:eq:optim_problem}
\end{equation}
where
\begin{equation}
\norm{ Q_{3}^{(i)} Q_{2}^{(i)} \bar{G}_{f} }_{\infty -} := \min_{1 \leq i \leq q } \norm{ Q_{3}^{(i)} Q_{2}^{(i)} \bar{G}_{f} }_{\infty}.
\end{equation}
This optimization problem is solved by determining $Q_{3}^{(i)}$ from a co-inner-outer factorization, $Q_{2}^{(i)} \bar{G}_{f} = G_{wo} G_{wi}$, where $G_{wo}$ is an invertible TFM which has only stable zeros and $G_{wi}$ is co-inner (i.e., $G_{wi} G_{wi}^{H} = I$ where $H$ denotes the Hermitian transpose). Setting $Q_{3}^{(i)} = \gamma G_{wo}^{-1}$ yields the optimal solution to \eqref{chpt6:eq:optim_problem}. Computing this outer factor generally involves solving a single Riccati equation \citep{liuOptimalSolutionsMultiobjective2007,Glover2011}.

\begin{remark}
The conditions of the AFDIP can be relaxed by replacing requirement e) such that the TFMs $R_{f_{j}}^{(i)}$ corresponding to $S_{ij} = 0$ are not required to be exactly zero, i.e., $R_{f_{j}}^{(i)} = 0$, but should be small $R_{f_{j}}^{(i)} \approx 0$. To this end, $\hat{G}^{(i)}_{d}$ is considered as part of $G_{w}$ which includes its contribution in the optimization problem \eqref{chpt6:eq:optim_problem}, instead of enforcing its contribution to zero via \eqref{chpt6:eq:QiGi}.
\end{remark}

With the optimal solution to the AFDIP established, the next section explores several special cases and their implications for solvability.

\section{Design for actuator and sensor faults}
\label{chpt6:sec:DESIGN FOR ACTUATOR AND SENSOR FAULTS}
This section discusses three practically relevant scenarios: 
actuator faults only,  sensor faults only, and combined actuator and sensor faults. 
For each case, the general solvability conditions of the AFDIP derived in Section~\ref{chpt6:sec:CLAspects} are translated into concrete design implications for admissible structure matrices $S$.

Throughout this section, $G_d$ is assumed absent. Disturbance effects can alternatively be included in $G_w$ and treated via the optimization problem \eqref{chpt6:eq:optim_problem}. In this case, the $S$-fault isolability condition reduces to
\begin{equation}
\mathrm{rank}
\begin{bmatrix}
\hat G_d^{(i)} & G_{f_j}
\end{bmatrix}
>
\mathrm{rank}
\begin{bmatrix}
\hat G_d^{(i)}
\end{bmatrix},
\quad \forall j \text{ with } S_{ij}\neq 0,
\end{equation}
where $\hat G_d^{(i)}$ is formed from those columns of $G_f$ corresponding to $S_{ij}=0$.

\subsection{Actuator faults}
Consider the system \eqref{chpt6:eq:y_OL} subject only to actuator faults, so that $n_f = n_u$ and the fault transfer matrix satisfies $G_f^{\mathrm{act}} = G_u$. Without loss of generality, the actuator fault model can be factorized as
\begin{equation}
G_f(s)=\frac{1}{d(s)}
\begin{bmatrix}
N_{11}(s) & \ldots & N_{1n_u}(s)\\
\vdots & \ddots & \vdots\\
N_{n_y1}(s) & \ldots & N_{n_yn_u}(s)
\end{bmatrix}.
\end{equation}
where $d(s)$ is a common denominator and $N(s)$ is a polynomial numerator matrix. Since multiplication by the scalar transfer function $1/d(s)$ does not affect the normal rank, all isolability conditions depend only on the column rank properties of $N(s)$.

Strong fault isolability ($S = I_{n_f}$) requires $\mathrm{rank}\, G_f = n_f$. Because $\mathrm{rank}\, G_f = \mathrm{rank}\, N$, this condition is equivalent to $N(s)$ having full column rank. Consequently, the number of sensors must satisfy $n_y \geq n_u$, and the fault columns must be linearly independent. Hence, a system with only actuator faults is strongly isolable if and only if the actuator fault transfer matrix is full column rank and at least as many sensors as actuators are available.

If $n_y < n_u$, strong isolation is impossible. In that case, the least restrictive structure is the hollow matrix $S = J_{n_f} - I_{n_f}$, corresponding to weak isolability under the assumption that only one fault occurs at a time. The rank conditions then reduce to the pairwise tests
\[
\mathrm{rank}
\begin{bmatrix}
G_{f_i} & G_{f_j}
\end{bmatrix}
>
\mathrm{rank}
\begin{bmatrix}
G_{f_i}
\end{bmatrix}
= 1,
\quad i \neq j.
\]
Since the denominator does not influence rank, this is equivalent to requiring that the columns of $N(s)$ are pairwise linearly independent. In addition, at least two sensors are required ($n_y \geq 2$), since otherwise all columns are necessarily dependent.

\subsection{Sensor faults}
Next, consider the system \eqref{chpt6:eq:y_OL} subject only to sensor faults. In this case $n_f = n_y$ and the fault model is given by
\[
G_f^{\mathrm{sens}}(s) = I_{n_y}.
\]
Hence, each fault affects exactly one measured output channel.

Since $G_f = I_{n_y}$, the fault transfer matrix is constant and full rank, with
\[
\mathrm{rank}\, G_f = n_f = n_y.
\]
Therefore, the strong isolability condition is always satisfied. A system with only sensor faults is thus inherently strongly fault isolable. Each sensor fault produces a direction in the residual space that is linearly independent of all others.

No additional requirements arise on the number of actuators, since actuator dynamics do not influence the rank of $G_f$ in this case. Moreover, because strong isolability holds, any less restrictive structure matrix $S$ (such as the hollow structure corresponding to weak isolation) is automatically achievable.

\subsection{Actuator and sensor faults}
\label{chpt6:subsec:Actuator and Sensor faults}

Finally, consider the system \eqref{chpt6:eq:y_OL} subject to both actuator and sensor faults. In this case
$n_f = n_u + n_y$ and the fault model $
G_f = \begin{bmatrix} G_f^{\mathrm{act}}(s) & G_f^{\mathrm{sens}}(s) \end{bmatrix}
= \begin{bmatrix} G_u(s) & I_{n_y} \end{bmatrix}$.

Without loss of generality, the combined fault model can be written as
\[
G_f(s)
=
\frac{1}{d(s)}
\begin{bmatrix}
N_{11}(s) & \ldots & N_{1n_u}(s) & d(s) & \ldots & 0 \\
\vdots & \ddots & \vdots & \vdots & \ddots & \vdots \\
N_{n_y1}(s) & \ldots & N_{n_yn_u}(s) & 0 & \ldots & d(s)
\end{bmatrix}.
\]
As in the actuator-only case, multiplication by the scalar factor $1/d(s)$ does not affect the normal rank. Hence, all isolability conditions depend solely on the column independence properties of the corresponding numerator matrix.

Strong isolability would require $\mathrm{rank}\, G_f = n_f = n_u + n_y$. However, since $G_f \in \mathbb{R}^{n_y \times (n_u+n_y)}$, its normal rank satisfies $\mathrm{rank}\, G_f \leq n_y$. Because $n_f = n_u + n_y > n_y$ whenever $n_u > 0$, the strong isolability condition can never be satisfied if both actuator and sensor faults are considered simultaneously. Thus, a system in which all actuators and all sensors are allowed to fail is never strongly fault isolable.

Consider next the weakest meaningful requirement, corresponding to the hollow structure $S = J_{n_f} - I_{n_f}$, which assumes that at most one fault occurs at a time. In this case, isolability reduces to pairwise column independence of $G_f$, i.e.,
\[
\mathrm{rank}
\begin{bmatrix}
G_{f_i} & G_{f_j}
\end{bmatrix}
>
\mathrm{rank}\, G_{f_i}
\qquad \forall i \neq j.
\]
Hence, weak isolability holds if and only if all fault columns of $G_f$ (both actuator and sensor columns) are pairwise independent and $n_y \geq 2$.

This condition reveals an inherent structural limitation: at least two sensors are required to distinguish any two distinct faults. In particular, if $n_y = 1$, then all fault columns are necessarily linearly dependent and isolation is impossible, even under the single-fault assumption.


\thispagestyle{empty}

\section{Experimental results} \label{chpt6:sec:EXPERIMENTAL PROOF OF PRINCIPLE}
In this section, a fault detection and isolation filter is applied in a prototype wafer stage used in the semiconductor industry. The system is equipped with 4 sensors and 13 actuators, which are all assumed to be prone to faults, hence the analysis presented in \Cref{chpt6:subsec:Actuator and Sensor faults} is used. First, the system and its control algorithm are introduced. Following this, an accurate parametric model of the system is presented. Subsequently, a tailored FDI filter design is synthesized based on an open-loop formulation to demonstrate the effectiveness of the fault diagnosis approach. The primary objective is to detect and isolate faults in each actuator and sensor.

\subsection{Prototype experimental wafer stage and control algorithm}
\label{chpt6:subsec:Prototype experimental setup}
Consider an overview of the prototype wafer stage depicted in Figure \ref{chpt6:fig:OAT_overview}. A close-up of the force frame and the bottom of the stage is depicted in Figure \ref{chpt6:fig:OAT_forceframe} and Figure \ref{chpt6:fig:OAT_Bottom}, respectively. The stage is the only moving part in the setup and is suspended by gravity compensators to reduce the required actuator forces. For this experimental case study only the actuators and sensors in the out-of-plane direction are considered. In this direction, the position of the stage is measured by four linear encoders with nanometer resolution. The system is actuated by 13 Lorentz actuators. The 13 coils are visible in Figure \ref{chpt6:fig:OAT_forceframe} and the interface on the chuck is shown in Figure \ref{chpt6:fig:OAT_Bottom}.

%
%

\begin{figure}[H]
    \centering
   \setlength{\fboxsep}{-1pt}%
    \setlength{\fboxrule}{1pt}%
    \fbox{\includegraphics[scale=.9]{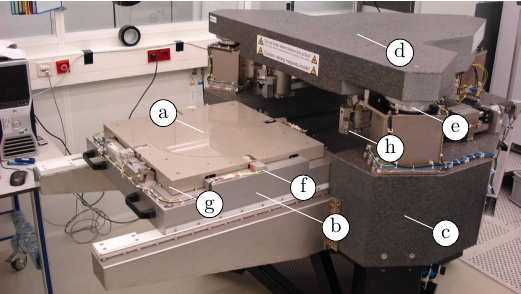}}
    \caption{Prototype experimental wafer stage setup. The moving part, the chuck, is indicated by \encircle{a} and is suspended by gravity compensators on the force frame \encircle{b}, which is on top of the base frame \encircle{c}. Currently, the chuck and force frame are slid out of the machine, whereas during operation, it is underneath the metroframe \encircle{d}. The metroframe is isolated from the fixed world through air mounts \encircle{e}. There are four Lorentz actuators in the horizontal plane of which one is indicated by \encircle{f}. These actuators apply a tangential force to the chuck. The actuators in the vertical plane are positioned between the chuck and the force frame. The position of the chuck is measured in the horizontal plane by means of capacitive sensors and measured in the vertical plane by means of linear encoders. The chuck has four scales \encircle{g} measured by the encoders \encircle{h} on the metroframe.}  
    \label{chpt6:fig:OAT_overview}
\end{figure}

\begin{figure}[H]
   \centering
   \setlength{\fboxsep}{-1pt}%
    \setlength{\fboxrule}{1pt}%
    \fbox{\includegraphics[scale=.9]{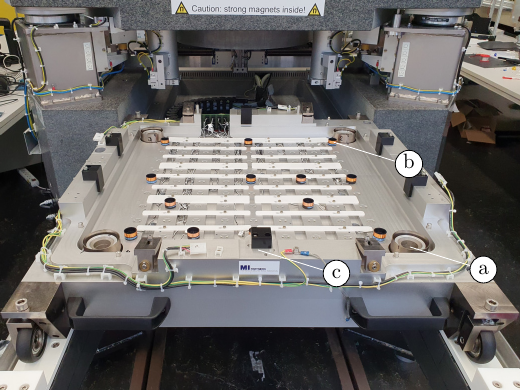}}
   \caption{Close-up of the force frame underneath the chuck. The interface of the gravity compensator to the force frame is indicated by \encircle{a}. One of the thirteen Lorentz actuators that apply a force to the chuck in vertical direction is indicated by \encircle{b}. The position of one of the in-plane actuators is indicated by \encircle{c}. }
   \label{chpt6:fig:OAT_forceframe}
\end{figure}

\thispagestyle{empty}

The system is controlled in all six degrees of freedom (DOFs) by a decentralized PID controller. To this end, the system is decoupled using input and output transformation matrices $T_{u}$ and $T_{y}$ respectively, see Figure \ref{chpt6:fig:RG_Nominal_OAT}. Each DOF is controlled by a dedicated PID controller in a diagonally structured $C$, operating at a sampling frequency of 10 kHz. The chuck follows a smooth $4^{\mathrm{th}}$-order setpoint with a stroke of $100$ $\mu$m at a frequency of $1$ Hz in the out-of-plane direction. The other two translational DOFs and three rotational DOFs are regulated to zero. 

\subsection{System identification}
\label{chpt6:subsec:Modeling and Control Algorithm}
A closed-loop multisine identification experiment is performed to obtain a best linear approximation (BLA) using the

\begin{figure}[H]
   \centering
   \setlength{\fboxsep}{-1pt}%
    \setlength{\fboxrule}{1pt}%
    \fbox{\includegraphics[scale=.9]{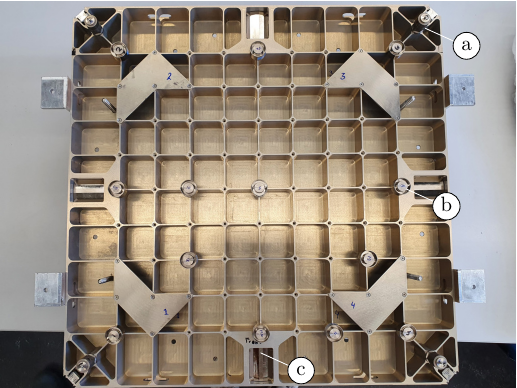}}
   \caption{Close-up of the bottom of the chuck. The interface of the gravity compensator to the chuck is indicated by \encircle{a} and the interface of the Lorentz actuator is indicated by \encircle{b}. The interface for the in-plane actuators is indicated by \encircle{c}.}
   \label{chpt6:fig:OAT_Bottom}
\end{figure}

\begin{figure}[H]
\centering
\includegraphics[width=\columnwidth]{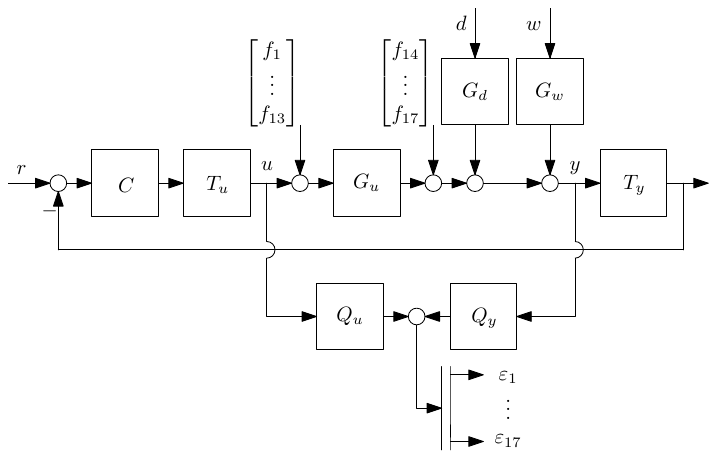}
\caption{Block diagram of the closed-loop controlled prototype wafer stage system. The 13 actuator and 4 sensor faults are indicated by $f_{1}$ to $f_{17}$ respectively.}
\label{chpt6:fig:RG_Nominal_OAT}
\end{figure}
%


\noindent robust method, see \cite{pintelonSystemIdentificationFrequency2012}. This frequency response function (FRF) is used to fit a parametric modal model of order 20. To this end, a frequency-domain modal identification algorithm is used, based on the MIMO simplified refined instrumental variable method (SRIVC) \citep{gonzalezIdentificationAdditiveContinuoustime2024,hulst2025frequency} and integrated prediction error minimization (IPEM) \citep{hulst2026structured}.

\begin{figure*}[tb]
\centering
\includegraphics[width = .9\linewidth]{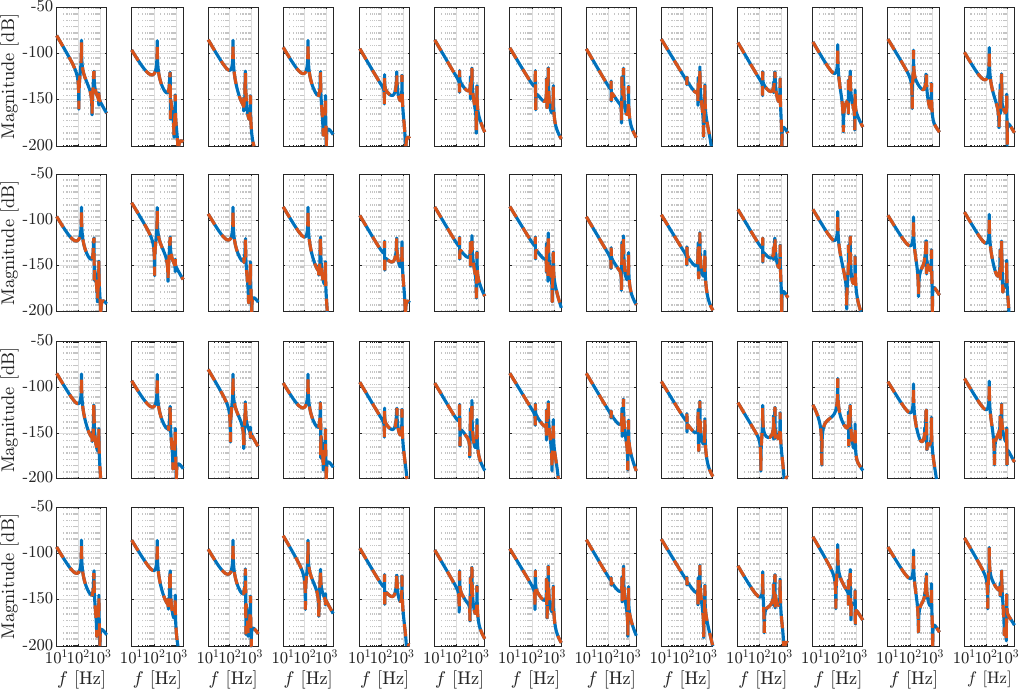}
\caption{Bode magnitude plot of the BLA in \tikzline{MatlabBlue}, measured using multisine excitation and the robust method, and the $20^{\mathrm{th}}$-order fit in \tikzdashedline{MatlabRed}, obtained using the frequency domain simplified refined instrumental variable (SRIVC) method with integrated prediction error minimization (IPEM).}
\label{chpt6:fig:CL_System3R}
\end{figure*}

The 20th-order model, denoted by $\hat{G}_{u}: \begin{bmatrix}
u_{1} & \ldots & u_{13}
\end{bmatrix}^{\top} \rightarrow \begin{bmatrix}
y_{1} & \ldots & y_{4}
\end{bmatrix}^{\top}$ is depicted in \Cref{chpt6:fig:CL_System3R} and serves as a basis for FDI filter synthesis.

\subsection{FDI Design}
The parametric model $\hat{G}_{u}$ is used to synthesize the FDI filter. The aim is to detect all possible faults, indicated in the block diagram in \Cref{chpt6:fig:RG_Nominal_OAT}. Transferring the actuator faults to the sensor side of the plant gives the fault model
\begin{equation}
G_{f} = \begin{bmatrix}
G_{f}^{\mathrm{act}} & G_{f}^{\mathrm{sens}}
\end{bmatrix},
\end{equation}
where $G_{f}^{\mathrm{act}} = \hat{G}_{u}$ and $G_{f}^{\mathrm{sens}} = I_{4}$. Since the sensors are of high quality, no disturbance and noise contributions are taken into account, so $G_{d}$ and $G_{w}$ are chosen to be void. The analysis in Section \ref{chpt6:sec:DESIGN FOR ACTUATOR AND SENSOR FAULTS} shows that strong fault isolability is not achievable. Hence, a less stringent structure $S$ is used. In this case, an almost hollow signature matrix is used, equal to
\setcounter{MaxMatrixCols}{18}
\begin{equation}
S = \resizebox{0.6\columnwidth}{!}{$\begin{bmatrix}
0 & 1 & 1 & 1 & 1 & 1 & 1 & 1 & 1 & 1 & 1 & 1 & 1 & 0 & 1 & 1 & 1 \\
1 & 0 & 1 & 1 & 1 & 1 & 1 & 1 & 1 & 1 & 1 & 1 & 1 & 1 & 0 & 1 & 1 \\
1 & 1 & 0 & 1 & 1 & 1 & 1 & 1 & 1 & 1 & 1 & 1 & 1 & 1 & 1 & 0 & 1 \\
1 & 1 & 1 & 0 & 1 & 1 & 1 & 1 & 1 & 1 & 1 & 1 & 1 & 1 & 1 & 1 & 0 \\
1 & 1 & 1 & 1 & 0 & 1 & 1 & 1 & 1 & 1 & 1 & 1 & 1 & 1 & 1 & 1 & 1 \\
1 & 1 & 1 & 1 & 1 & 0 & 1 & 1 & 1 & 1 & 1 & 1 & 1 & 1 & 1 & 1 & 1 \\
1 & 1 & 1 & 1 & 1 & 1 & 0 & 1 & 1 & 1 & 1 & 1 & 1 & 1 & 1 & 1 & 1 \\
1 & 1 & 1 & 1 & 1 & 1 & 1 & 0 & 1 & 1 & 1 & 1 & 1 & 1 & 1 & 1 & 1 \\
1 & 1 & 1 & 1 & 1 & 1 & 1 & 1 & 0 & 1 & 1 & 1 & 1 & 1 & 1 & 1 & 1 \\
1 & 1 & 1 & 1 & 1 & 1 & 1 & 1 & 1 & 0 & 1 & 1 & 1 & 1 & 1 & 1 & 1 \\
1 & 1 & 1 & 1 & 1 & 1 & 1 & 1 & 1 & 1 & 0 & 1 & 1 & 1 & 1 & 1 & 1 \\
1 & 1 & 1 & 1 & 1 & 1 & 1 & 1 & 1 & 1 & 1 & 0 & 1 & 1 & 1 & 1 & 1 \\
1 & 1 & 1 & 1 & 1 & 1 & 1 & 1 & 1 & 1 & 1 & 1 & 0 & 1 & 1 & 1 & 1 \\
1 & 1 & 1 & 1 & 1 & 1 & 1 & 1 & 1 & 1 & 1 & 1 & 1 & 0 & 1 & 1 & 1 \\
1 & 1 & 1 & 1 & 1 & 1 & 1 & 1 & 1 & 1 & 1 & 1 & 1 & 1 & 0 & 1 & 1 \\
1 & 1 & 1 & 1 & 1 & 1 & 1 & 1 & 1 & 1 & 1 & 1 & 1 & 1 & 1 & 0 & 1 \\
1 & 1 & 1 & 1 & 1 & 1 & 1 & 1 & 1 & 1 & 1 & 1 & 1 & 1 & 1 & 1 & 0  
\end{bmatrix}$}. \label{chpt6:eq:S17x17}
\end{equation}
The zeros in the top right corner of $S$ decouple the sensor faults $f_{14}$ to $f_{17}$ in the first four residuals to avoid enforcing fault-sensitivity in directions where the achievable sensitivity is too small for reliable detection. Using this structure $S$ an approximate fault detection and isolation filter is synthesized using the approach described in Section \ref{chpt6:sec:SOLUTION}.
\thispagestyle{empty}

\begin{figure*}[tb]
\centering
\includegraphics[width = .9\linewidth]{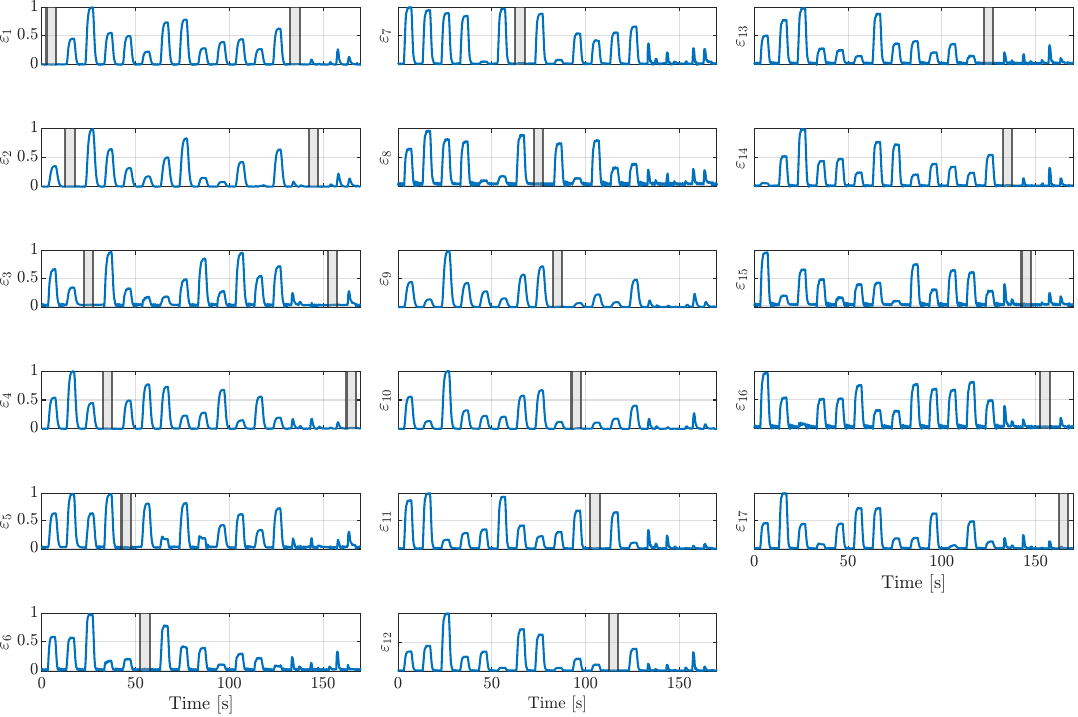}
\caption{Absolute and normalized values of the residual signals $\varepsilon_{1}$ to $\varepsilon_{17}$ \tikzline{MatlabBlue}. The faults can easily be localized using the fault signatures encoded in the structure of $S$ in \eqref{chpt6:eq:S17x17}. The corresponding time windows where the residuals are expected to be zero due to the embedded structure are indicated with the vertical lines \tikzline{black}. For instance, between $t=2.5$ s and $t=7.5$ s, all residuals fire except for $\varepsilon_{1}$, indicating a fault at actuator 1.}
\label{chpt6:fig:Residuals}
\end{figure*}

\subsection{Experimental time-domain results}
The FDI filter is synthesized and integrated into the closed-loop control system of the prototype wafer stage. The resulting filter is of order $42$, leading to a low computational load that is well suited for real-time implementation. Faults are artificially injected into the system via the channels indicated in Figure \ref{chpt6:fig:RG_Nominal_OAT}. First, each actuator fault is introduced sequentially as a step signal with magnitude $f_{1}, \ldots, f_{13} = 0.1$ N. Following this, sensor faults are injected in the form of a step signal with magnitude $10$ $\mu$m. The first fault is injected at $t = 2.5$ s until $t = 7.5$ s. Each subsquent fault is injected $5$ s after the previous fault.

 The absolute and normalized values of the residual signals are shown in \Cref{chpt6:fig:Residuals}. Clearly, the faults are easily detectable and the rootcause of the fault can be deduced by comparison of the fired signals and the fault signatures embedded in $S$. For instance, around $t=5$ s the residuals $\varepsilon_{2}$ to $\varepsilon_{17}$ fire while $\varepsilon_{1} \approx 0$ (see gray highlighted area), which indicates an additive fault at the first actuator. The residual signals can be processed, e.g., by applying suitable thresholds, to obtain a Boolean decision vector indicating the location of the active fault.
 
The residual signals exhibit small oscillations in the fault-free case, particularly with a period equal to the setpoint $r$. This undesired effect is attributed to the model-reality mismatch, resulting in a slight violation of the decoupling properties and causing a minor leakage of $r$ into the residual signals.

The actuator faults produce persistent residual signals in the case of persistent fault excitation. This implies an asymptotically non-vanishing residual signal in the case of persistent faults such as step or sinusoidal signal. The sensor faults lack strong fault detectability making detection slightly more difficult because its effect manifest in the residual only during the transient of the fault, thus the effect disappears in the residual over time, see \cite[Chapter 3]{vargaSolvingFaultDiagnosis2017} for details. However, with proper processing of the residuals, all faults are detected and isolated, showcasing the applicability of this approach to high-precision industrial equipment.

\thispagestyle{empty}

\section{Conclusion} \label{chpt6:sec:CONCLUSION}
Essential principles for fault detection and isolation for closed-loop controlled mechatronic systems are established. It is shown that the open-loop formulation may be used in closed-loop scenarios. Moreover, fundamental limitations of systems with actuator and sensor faults have been uncovered by translating generic existence requirements to these specific cases. In addition, a nullspace-based fault diagnosis system was synthesized, with experimental results demonstrating the effectiveness of the approach. These results and the experimental validation provide a solid foundation for the application of fault diagnosis to large-scale closed-loop controlled industrial systems.

   \bibliographystyle{elsarticle-harv} 
    \bibliography{references2}

@thesis{classensFaultDiagnosisUncertain2024,
  type = {Phd Thesis 1 (Research TU/e / Graduation TU/e)},
  title = {Fault {{Diagnosis}} for {{Uncertain Systems}} in {{Closed Loop}}: {{Applied}} to {{Semiconductor Equipment}}},
  author = {Classens, Koen},
  year = {2024},
  institution = {Eindhoven University of Technology},
  location = {Eindhoven, The Netherlands},
  langid = {english}
}

@Article{gaoSurveyFaultDiagnosis2015a,
  author       = {Gao, Zhiwei and Cecati, Carlo and Ding, Steven X.},
  year         = {2015},
  journal = {IEEE Trans. Ind. Electron},
  title        = {A {{survey}} of {{fault diagnosis}} and {{fault-tolerant techniques}}—{{Part I}}: {{Fault diagnosis with model-based}} and {{signal-based approaches}}},
  doi          = {10.1109/TIE.2015.2417501},
  issn         = {1557-9948},
  number       = {6},
  pages        = {3757--3767},
  volume       = {62},
  eventtitle   = {{{IEEE Transactions}} on {{Industrial Electronics}}},
  shorttitle   = {A {{Survey}} of {{Fault Diagnosis}} and {{Fault-Tolerant Techniques}}—{{Part I}}},
}

@Article{hwangSurveyFaultDetection2010,
  author       = {Hwang, I. and Kim, S. and Kim, Y. and Seah, C. E.},
  year         = {2010},
  journal = {IEEE Trans. Control Syst. Technol.},
  title        = {A {{survey}} of {{fault detection}}, {{isolation}}, and {{reconfiguration methods}}},
  doi          = {10.1109/TCST.2009.2026285},
  issn         = {1558-0865},
  number       = {3},
  pages        = {636--653},
  volume       = {18},
  eventtitle   = {{{IEEE Transactions}} on {{Control Systems Technology}}},
  ids          = {hwangSurveyFaultDetection2010a},
}

@Article{isermannModelbasedFaultdetectionDiagnosis2005,
  author       = {Isermann, Rolf},
  year         = {2005},
  journal = {Annu. Rev. Control},
  title        = {Model-Based Fault-Detection and Diagnosis - Status and Applications},
  doi          = {10.1016/j.arcontrol.2004.12.002},
  issn         = {13675788},
  number       = {1},
  pages        = {71--85},
  volume       = {29},
  langid       = {english},
}

@Article{frank1997survey,
  author  = {Frank, P. M. and Ding, X.},
  title   = {{Survey of robust residual generation and evaluation methods in observer-based fault detection systems}},
  doi     = {10.1016/S0301-4770(08)60756-3},
  issn    = {03014770},
  pages   = {403--424},
  volume  = {7},
  journal = {J. Process Control},
  year    = {1997},
}

@Article{zhangIntegratedTradeoffDesign2008,
  author       = {Zhang, Ping and Ding, Steven X.},
  journal = {Automatica},
  title        = {An Integrated Trade-off Design of Observer Based Fault Detection Systems},
  doi          = {10.1016/j.automatica.2007.11.021},
  issn         = {0005-1098},
  number       = {7},
  pages        = {1886--1894},
  volume       = {44},
  langid       = {english},
  year         = {2008},
}

@Article{wangLMIApproachIndex2007,
  author       = {Wang, J. L. and Yang, G. and Liu, J.},
  journal = {Automatica},
  title        = {An {{LMI}} Approach to  {{$H_-$}}  Index and Mixed {{$H_-$}}/{{$H_{\infty}$}} Fault Detection Observer Design},
  doi          = {10.1016/j.automatica.2007.02.019},
  issn         = {00051098},
  number       = {9},
  pages        = {1656--1665},
  volume       = {43},
  year         = {2007},
}

@InProceedings{gertlerDiagnosingParametricFaults1995,
  author     = {Gertler, J.},
  booktitle  = {1995 {{American Control Conference}} ({{ACC}})},
  title      = {Diagnosing {{parametric faults}}: {{From parameter estimation}} to {{parity relations}}},
  doi        = {10.1109/ACC.1995.529780},
  pages      = {1615-1620},
  volume     = {3},
  shorttitle = {Diagnosing Parametric Faults},
  year       = {1995},
}

@InProceedings{classens2022fault,
  author    = {Classens, K. and Mostard, M. and van de Wijdeven, J. and Heemels, W. P. M. H. and Oomen, T.},
  booktitle = {2nd Modeling, Estimation and Control Conference (MECC)},
  doi        = {10.1016/j.ifacol.2022.11.271},
  title     = {Fault Detection for Precision Mechatronics: Online Estimation of Mechanical Resonances},
  location  = {Jersey City, NJ, USA},
  pages     = {746--751},
  year      = {2022},
}

@InProceedings{hulst2025frequency,
  author    = {van der Hulst, M. and Gonz{\'a}lez, Rodrigo A and Classens, K. and Tacx, P. and Dirkx, N. and van de Wijdeven, J. and Oomen, T.},
  booktitle = {Joint 10th IFAC Symposium on Mechatronic Systems and 14th Symposium on Robotics},
  doi        = {10.1016/j.ifacol.2025.10.141},
  title     = {Frequency domain identification for multivariable motion control systems: Applied to a prototype wafer stage},
  location  = {Paris, France},
  pages     = {67-72},
  year      = {2025},
}

@Article{hulst2026structured,
  author   = {van der Hulst, M. and Gonz{\'a}lez, Rodrigo A and Classens, K. and Tacx, P. and Dirkx, N. and van de Wijdeven, J. and Oomen, T.},
  title    = {{Structured identification of multivariable modal systems}},
doi = {10.1016/j.ymssp.2026.113948},
  pages    = {113948},
  volume = {247},
 journal = {Mech. Syst. Signal Process.},
  year     = {2026},
}

@Article{Ding2000,
  author   = {Ding, S. X. and Jeinsch, T. and Frank, P. M. and Ding, E. L.},
  title    = {{A unified approach to the optimization of fault detection systems}},
doi = {10.1002/1099-1115(200011)14:7<725::AID-ACS618>3.0.CO;2-Q},
  pages    = {725--745},
  journal  = {Int. J. Adapt. Control Signal Process.},
  year     = {2000},
}

@Article{yinReviewBasicDataDriven2014,
  author       = {Yin, Shen and Ding, Steven X. and Xie, Xiaochen and Luo, Hao},
  journal = {IEEE Trans. Ind. Electron},
  title        = {A {{review}} on {{basic data-driven approaches}} for {{industrial process monitoring}}},
  doi          = {10.1109/TIE.2014.2301773},
  issn         = {1557-9948},
  number       = {11},
  pages        = {6418--6428},
  volume       = {61},
  eventtitle   = {{{IEEE Transactions}} on {{Industrial Electronics}}},
  year         = {2014},
}

@Article{leiApplicationsMachineLearning2020,
  author       = {Lei, Yaguo and Yang, Bin and Jiang, Xinwei and Jia, Feng and Li, Naipeng and Nandi, Asoke K.},
  journal = {Mech. Syst. Signal Process.},
  title        = {Applications of Machine Learning to Machine Fault Diagnosis: {{A}} Review and Roadmap},
  doi          = {10.1016/j.ymssp.2019.106587},
  issn         = {08883270},
  pages        = {106587},
  volume       = {138},
  langid       = {english},
  shorttitle   = {Applications of Machine Learning to Machine Fault Diagnosis},
  year         = {2020},
}

@Book{vargaSolvingFaultDiagnosis2017,
  author    = {Varga, Andreas},
  year      = {2017},
  title     = {{Solving fault diagnosis problems}},
  isbn      = {978-3-319-51558-8},
doi = {10.1007/978-3-031-35767-1},
publisher = {Springer Nature},
  booktitle = {Springer},
}

@Article{vargaNewComputationalParadigms2013,
  author       = {Varga, Andreas},
  year         = {2013},
  journal = {Annu. Rev. Control},
  title        = {New Computational Paradigms in Solving Fault Detection and Isolation Problems},
  doi          = {10.1016/j.arcontrol.2013.03.001},
  number       = {1},
  volume       = {37},
}

@Book{gertlerFaultDetectionDiagnosis1998,
  author   = {Gertler, J.},
  title    = {Fault {{detection}} and {{diagnosis}} in {{engineering systems}}},
  location = {{Boca Raton}},
  year     = {1998},
}

@Article{vargaComputingAchievableFault2009,
  author       = {Varga, Andras},
  journal = {IFAC Proceedings Volumes},
  title        = {On Computing Achievable Fault Signatures},
  doi          = {10.3182/20090630-4-ES-2003.00154},
  issn         = {14746670},
  number       = {8},
  pages        = {935--940},
  volume       = {42},
  langid       = {english},
  year         = {2009},
}

@Article{vargaComputingNullspaceBases2008,
  author       = {Varga, Andras},
  journal = {IFAC Proceedings Volumes},
  title        = {On Computing Nullspace Bases — a Fault Detection Perspective},
  doi          = {10.3182/20080706-5-KR-1001.01062},
  issn         = {14746670},
  number       = {2},
  pages        = {6295--6300},
  volume       = {41},
  langid       = {english},
  year         = {2008},
}

@Article{kimuraGeometricStructureObservers1977,
  author       = {Kimura, H.},
  journal      = {IEEE Trans. Autom. Control},
  title        = {Geometric Structure of Observers for Linear Feedback Control Laws},
  doi          = {10.1109/TAC.1977.1101623},
  issn         = {1558-2523},
  number       = {5},
  pages        = {846--855},
  volume       = {22},
  eventtitle   = {{{IEEE Transactions}} on {{Automatic Control}}},
  year         = {1977},
}

@InProceedings{vargaReliableAlgorithmsComputing2003,
  author    = {Varga, A.},
  booktitle = {42nd {{IEEE International Conference}} on {{Decision}} and {{Control}}},
  title     = {Reliable Algorithms for Computing Minimal Dynamic Covers},
  doi       = {10.1109/CDC.2003.1272887},
  pages     = {1873-1878},
  issn      = {0191-2216},
  year      = {2003},
}

@Book{Zhou1996a,
  author    = {Zhou, K. and Doyle, J. and Glover, K.},
  title     = {{Robust and optimal control}},
  doi       = {10.1016/s0005-1098(97)00132-5},
  publisher = {Prentice Hall},
  issn      = {01912216},
  year      = {1996},
}

@InProceedings{liuOptimalSolutionsMultiobjective2007,
  author     = {Liu, Nike and Zhou, Kemin},
  booktitle  = {2007 46th {{IEEE Conference}} on {{Decision}} and {{Control}}},
  year       = {2007},
  title      = {Optimal Solutions to Multi-Objective Robust Fault Detection Problems},
  doi        = {10.1109/CDC.2007.4434123},
  eventtitle = {2007 46th {{IEEE Conference}} on {{Decision}} and {{Control}}},
  isbn       = {1424414989},
  pages      = {981--988},
  issn       = {25762370},
  journal    = {Proceedings of the IEEE Conference on Decision and Control},
}

@Inproceedings{Glover2011,
  author  = {Glover, Keith and Varga, Andras},
  title   = {{On solving non-standard $\mathcal{H}_-/\mathcal{H}_{2/\infty}$ fault detection problems}},
  doi     = {10.1109/CDC.2011.6160723},
  booktitle  = {2011 50th {{IEEE Conference}} on {{Decision}} and {{Control}}},
  issn    = {25762370},
  number  = {2},
  pages   = {891--896},
  isbn    = {9781612848006},
  eventtitle = {2011 50th {{IEEE Conference}} on {{Decision}} and {{Control}}},
  year    = {2011},
}

@Book{pintelonSystemIdentificationFrequency2012,
  author    = {Pintelon, Rik and Schoukens, Johan},
  year      = {2012},
  title     = {System {i}dentification: {A} {f}requency {d}omain {a}pproach},
  edition   = {2nd},
  publisher = {John Wiley \& Sons},
  address   = {Hoboken, New Jersey},
}

@Article{gonzalezIdentificationAdditiveContinuoustime2024,
  author   = {Gonz{\'a}lez, Rodrigo A and Classens, Koen and Rojas, Cristian R and Welsh, James S and Oomen, Tom},
  title    = {Identification of {{additive continuous-time systems}} in {{open}} and {{closed loop}}},
volume = {173},
number ={112013},
doi = {10.1016/j.automatica.2024.112013},
  journal = {Automatica},
  year    = {2025},
}

@Article{vandenhofClosedloopIssuesSystem1998,
  author       = {{Van den Hof}, P. M. J.},
  title        = {Closed-Loop Issues in System Identification},
  doi          = {10.1016/S1367-5788(98)00016-9},
  pages        = {173--186},
  volume       = {22},
  ids          = {hofClosedloopIssuesSystem1998},
  journal      = {Annu. Rev. Control},
  publisher    = {Elsevier},
  year         = {1998},
}

@Article{Rezamand2020,
  author   = {Rezamand, M. and Kordestani, R. and Carriveau, R. and Ting, D.S.K. and Saif, M},
  title    = {{A New Hybrid Fault Detection Method for Wind Turbine Blades Using Recursive PCA and Wavelet-Based}},
doi = {10.1109/JSEN.2019.2948997},
  pages    = {2023-2033},
  volume = {20},
  number       = {4},
 journal = {IEEE Sens. J.},
  year     = {2020},
}

@Article{Neupane2025,
  author   = {Neupane, D. and Bouadjenek, M.R. and Dazeley, R. and Aryal, S.},
  title    = {{Data-driven machinery fault diagnosis: A comprehensive review
}},
doi = {10.1016/j.neucom.2025.129588},
  pages    = {129588},
  volume = {627},
 journal = {Neurocomputing},
  year     = {2025},
}

@Article{Lei2018,
  author   = {Lei, Y. and Li, N. and Guo, L. and Ningbo, L. and Tan, T. and Lin, J.},
  title    = {{Machinery health prognostics: A systematic review from data acquisition to RUL prediction
}},
doi = {10.1016/j.ymssp.2017.11.016},
  pages    = {799-834},
  volume = {104},
 journal = {Mech. Syst. Signal Process.},
  year     = {2018},
}

@Article{Mian2024,
  author   = {Mian, Z. and Deng, X. and Dongg, X. and Tian, Y. and Cao, T. and  Chen, K. and Al Jaber, T.},
  title    = {{A literature review of fault diagnosis based on ensemble learning
}},
doi = {10.1016/j.engappai.2023.107357},
  pages    = {107357},
  volume = {127},
 journal = {Eng. Appl. Artif. Intell.},
  year     = {2024},
}

@Article{Leite2025,
  author   = {Leite, D. and Andrade, E. and Rativa, D. and Maciel, A.M.A.},
  title    = {{Fault Detection and Diagnosis in Industry 4.0: A Review on Challenges and Opportunities
}},
doi = {10.3390/s25010060},
  pages    = {60},
  volume = {25},
 journal = {Sensors},
  year     = {2024},
}

@Article{Li2020,
  author   = {Li, W. and Li, H. and Gu, S. and Chen, T.},
  title    = {{Process fault diagnosis with model- and knowledge-based approaches: Advances and opportunities
}},
doi = {10.1016/j.conengprac.2020.104637},
  pages    = {104637},
  volume = {105},
 journal = {Control Eng. Pract.},
  year     = {2020},
}

@Article{Wilhelm2021,
  author   = {Wilhelm, Y. and Reimann, P. and Gauchel, W. and Mitschang, B.},
  title    = {{Overview on hybrid approaches to fault detection and diagnosis: Combining data-driven, physics-based and knowledge-based models
}},
doi = {10.1016/j.procir.2021.03.041},
  pages    = {278-283},
  volume = {99},
 journal = {Procedia CIRP},
  year     = {2021},
}

\appendix

\section{Proof of Theorem \ref{chpt5:theoremOLisCL_nygeqnu}}
\label{app:proof:theoremOLisCL_nygeqnu}
\thispagestyle{empty}

\begin{proof} 
Any fault detection filter derived from the open-loop formulation \eqref{chpt6:eq:OL_compare}, satisfies $R_{u}^{\mathrm{ol}} = 0$ and $R_{d}^{\mathrm{ol}} = 0$, or equivalently,
\begin{equation}
\begin{bmatrix}
Q_{y} & Q_{u}
\end{bmatrix} \underbrace{\begin{bmatrix}
G_{u} & G_{d} \\
I & 0
\end{bmatrix}}_{:= G^{\mathrm{ol}}} = 0. \label{eq:OL_nullspace}
\end{equation}
Let $r_{d}$ be the normal rank of $G_{d}$. Then, the normal rank of $G^{\mathrm{ol}}$ is $r^{\mathrm{ol}} = n_{u} + r_{d}$. Let $N_{l}$ be a $(n_{y} - r_{d}) \times (n_{y} + n_{u})$ basis of the left nullspace $\mathrm{null}_{l}(G^{\mathrm{ol}})$. Then all fault detection filters can be parameterized by $Q = W N_{l}$, with $W$ a suitable TFM.

Now consider the closed-loop description \eqref{chpt6:eq:compare}, which can be rewritten as
\begin{align*}
&\begin{bmatrix}
R_{r}^{\mathrm{cl}} & R_{d}^{\mathrm{cl}} & R_{w}^{\mathrm{cl}} & R_{f}^{\mathrm{cl}}
\end{bmatrix} \\ & \qquad = \begin{bmatrix}
Q_{y} & Q_{u}
\end{bmatrix} \begin{bmatrix}
G_{u} C S & S G_{d} & S G_{w} & S G_{f} \\
C S & -C S G_{d} & -C S G_{w} & -C S G_{f}
\end{bmatrix},  \\ & \qquad = \begin{bmatrix}
Q_{y} & Q_{u}
\end{bmatrix} \begin{bmatrix}
G_{u} & G_{d} & G_{w} & G_{f} \\
I & 0 & 0 & 0
\end{bmatrix} \\ & \qquad \quad \times \begin{bmatrix}
C S & -CS G_{d} & -CS G_{w} & -CS G_{f} \\
0 & I & 0 & 0 \\
0 & 0 & I & 0 \\
0 & 0 & 0 & I
\end{bmatrix}.
\end{align*}

Any fault detection filter derived from this closed-loop formulation, should satisfy $R_{r}^{\mathrm{cl}} = 0$ and $R_{d}^{\mathrm{cl}} = 0$. Thus,
\begin{align}
&\begin{bmatrix}
Q_{y} & Q_{u}
\end{bmatrix} \underbrace{\begin{bmatrix}
G_{u} C S & S G_{d} \\
C S & -C S G_{d}
\end{bmatrix}}_{:= G^{\mathrm{cl}}} = 0, \label{eq:CL_nullspace} \\
&\begin{bmatrix}
Q_{y} & Q_{u}
\end{bmatrix} \underbrace{\begin{bmatrix}
G_{u} & G_{d} \\
I & 0
\end{bmatrix}}_{= G^{\mathrm{ol}}} \underbrace{\begin{bmatrix}
CS & -CS G_{d} \\ 0 & I
\end{bmatrix}}_{:= H_{1}} = 0. \label{eq:CL_nullspace_last}
\end{align}
If $C$ is full normal rank, the normal rank of $H_{1}$ equals $n_{u} + n_{d}$, which is equal to its number of columns. As a result, $r^{\mathrm{cl}} = r^{\mathrm{ol}} = n_{u} + r_{d}$. Given that $G^{\mathrm{cl}}$ and $G^{\mathrm{ol}}$ have the same number of columns and same rank, $\mathrm{dim} \; \mathrm{null}_{l} (G^{\mathrm{cl}}) = \mathrm{dim} \; \mathrm{null}_{l} (G^{\mathrm{ol}})$.

Next, it is shown that not only the dimension of the left nullspace and rank are equal, but that $G^{\mathrm{cl}}$ and $G^{\mathrm{ol}}$ have an equal left nullspace. To this end, consider that
\begin{align*}
Q \in \mathrm{null}_{l} (G^{\mathrm{ol}}) &\implies Q G^{\mathrm{ol}} = 0\\
&\implies (Q G^{\mathrm{ol}}) H_{1} = 0 \\
&\implies Q G^{\mathrm{cl}} = 0 \\
&\implies Q \in \mathrm{null}_{l} G^{\mathrm{cl}} \\
&\implies \mathrm{null}_{l} (G^{\mathrm{ol}}) \subseteq \mathrm{null}_{l} (G^{\mathrm{cl}})
\end{align*}
From, $\mathrm{dim} \; \mathrm{null}_{l} (G^{\mathrm{cl}}) = \mathrm{dim} \; \mathrm{null}_{l} (G^{\mathrm{ol}})$ and $\mathrm{null}_{l} (G^{\mathrm{ol}}) \subseteq \mathrm{null}_{l} (G^{\mathrm{cl}})$ it can be concluded that $\mathrm{null}_{l} (G^{\mathrm{ol}}) = \mathrm{null}_{l} (G^{\mathrm{cl}})$. Hence, any filter that achieves $R_{u}^{\mathrm{ol}}=0$, $R_{d}^{\mathrm{ol}}=0$ in the open-loop formulation achieves $R_{r}^{\mathrm{cl}}=0$, $R_{d}^{\mathrm{cl}}=0$.

Subsequently, the residual dynamics are examined. First, the open-loop formulation is considered. To this end, parameterize $G_{u}$, $G_{d}$, $G_{w}$ and $G_{f}$ by a left coprime factorization (LCF)
\begin{equation*}
    \begin{bmatrix}
        G_{u} & G_{d} & G_{w} & G_{f}
    \end{bmatrix} = M^{-1} \begin{bmatrix}
        N_{u} & N_{d} & N_{w} & N_{f}
    \end{bmatrix},
\end{equation*}
where $M$ and $\begin{bmatrix}
        N_{u} & N_{d} & N_{w} & N_{f}
\end{bmatrix}$ are proper and stable factors. Consider the basis 
\begin{equation*}
    N_{l} = N_{l,d} \begin{bmatrix}
        M & - N_{u}
    \end{bmatrix},
\end{equation*}
where $N_{l,d}$ is a $(n_{y} - r_{d}) \times n_{y}$ proper stable basis of $\mathrm{null}_{l}(N_{d})$. With this basis, any fault detection filter satisfying $R_{u}^{\mathrm{ol}} = 0$ and $R_{d}^{\mathrm{ol}} = 0$ can be expressed as $Q = W N_{l,d} \begin{bmatrix}
        M & - N_{u}
    \end{bmatrix}$. Substituting this parametrization in \eqref{chpt6:eq:OL_compare} gives
\begin{align*}
&\begin{bmatrix}
R_{u}^{\mathrm{ol}} & R_{d}^{\mathrm{ol}} & R_{w}^{\mathrm{ol}} & R_{f}^{\mathrm{ol}}
\end{bmatrix} \\ & \qquad = W N_{l,d} \begin{bmatrix}
        M & - N_{u}
    \end{bmatrix} \begin{bmatrix}
G_{u} & G_{d} & G_{w} & G_{f} \\
I & 0 & 0 & 0
\end{bmatrix},\\ & \qquad = W N_{l,d} \begin{bmatrix}
0 & M G_{d} & M G_{w} & M G_{f} \\
\end{bmatrix},\\ & \qquad = W \begin{bmatrix}
0 & 0 & N_{l,d} M G_{w} & N_{l,d} M G_{f} \\
\end{bmatrix},
\end{align*}
and thus the remaining residual dynamics in the open-loop formulation equal
\begin{equation}
    \varepsilon = W N_{l,d} (N_{w} w + N_{f} f). \label{chpt6:eq:remainingresidual_OL}
\end{equation}
Next, consider the closed-loop dynamics. Since $\mathrm{null}_{l} (G^{\mathrm{ol}}) = \mathrm{null}_{l} (G^{\mathrm{cl}})$, any residual generator in closed loop satisfying $R_{r}^{\mathrm{cl}}=0$ and $R_{d}^{\mathrm{cl}}=0$ can be parameterized as $Q = W N_{l,d} \begin{bmatrix}
        M & - N_{u}
\end{bmatrix}$. Substituting this parametrization in \eqref{chpt6:eq:compare} gives

\begin{align*}
&\begin{bmatrix}
R_{r}^{\mathrm{cl}} & R_{d}^{\mathrm{cl}} & R_{w}^{\mathrm{cl}} & R_{f}^{\mathrm{cl}}
\end{bmatrix} \\ & \quad = W N_{l,d} \begin{bmatrix}
        M & - N_{u}
\end{bmatrix} \begin{bmatrix}
G_{u} C S & S G_{d} & S G_{w} & S G_{f} \\
C S & -C S G_{d} & -C S G_{w} & -C S G_{f}
\end{bmatrix},  \\ & \quad = W N_{l,d} \begin{bmatrix}
0 & M G_{d} & M G_{w} & M G_{f}
\end{bmatrix}, \\ & \quad = W \begin{bmatrix}
0 & 0 & N_{l,d} M G_{w} & N_{l,d} M G_{f} \\
\end{bmatrix}.
\end{align*}
and thus the remaining residual dynamics equal
\begin{equation}
    \varepsilon = W N_{l,d} (N_{w} w + N_{f} f),
\end{equation}
which is indeed the same as \eqref{chpt6:eq:remainingresidual_OL}, i.e. $R_{w}^{\mathrm{ol}} = R_{w}^{\mathrm{cl}}$ and $R_{f}^{\mathrm{ol}} = R_{f}^{\mathrm{cl}}$.
\end{proof}

\section{Proof of Theorem \ref{chpt5:theoremOLisCL_nystnu}} \label{app:proof:theoremOLisCL_nystnu}

\begin{proof}
    Following a similar reasoning as in the proof of Theorem \ref{chpt5:theoremOLisCL_nygeqnu}, let $N_{l}$ be a $(n_{y} - r_{d}) \times (n_{y} + n_{u})$ basis of the left nullspace $\mathrm{null}_{l}(G^{\mathrm{ol}})$. 

    Any fault detection filter derived from the closed-loop formulation, should satisfy $R_{r}^{\mathrm{cl}} = 0$ and $R_{d}^{\mathrm{cl}} = 0$, which can be written as,
\begin{align}
&\begin{bmatrix}
Q_{y} & Q_{u}
\end{bmatrix} \underbrace{\begin{bmatrix}
G_{u} & I \\
I & 0
\end{bmatrix}}_{:=H_{2}} \underbrace{\begin{bmatrix}
CS & -CS G_{d} \\ 0 & G_{d}
\end{bmatrix}}_{:=H_{3}} = 0.
\end{align}
Clearly, the normal rank of $H_{2}$ is $n_{y} + n_{u}$. If $C$ is full normal rank, the normal rank of $H_{3}$ equals $n_{y} + r_{d}$. As a result, the normal rank of $G^{\mathrm{cl}} = H_{2} H_{3}$ is $n_{y} + r_{d}$. In this case, $N_{l}$ is a $(n_{u} - r_{d}) \times (n_{y} + n_{u})$ basis of the left nullspace $\mathrm{null}_{l}(G^{\mathrm{cl}})$. Hence, $\mathrm{dim} \; \mathrm{null}_{l} (G^{\mathrm{cl}}) > \mathrm{dim} \; \mathrm{null}_{l} (G^{\mathrm{ol}})$. Following the same reasoning as Theorem \ref{chpt5:theoremOLisCL_nygeqnu}, $\mathrm{null}_{l} (G^{\mathrm{ol}}) \subseteq \mathrm{null}_{l} (G^{\mathrm{cl}})$. Hence, the filters $Q$ achieving $R_{u}^{\mathrm{ol}}=0$, $R_{d}^{\mathrm{ol}}=0$ are a subset of the filters achieving $R_{r}^{\mathrm{cl}}=0$, $R_{d}^{\mathrm{cl}}=0$.
\end{proof}
\thispagestyle{empty}

\end{document}